\documentclass{article}
\usepackage[utf8]{inputenc}

\usepackage[T1]{fontenc}
\usepackage{soul, comment} 
\usepackage{phfqit}
\usepackage{proba} 
\usepackage[sort,nocompress]{cite}
\usepackage{amssymb}
\usepackage{amsmath}
\usepackage{amsthm}
\usepackage{amsfonts}
\usepackage{mathtools}
\usepackage{xspace}
\usepackage{bm}
\usepackage{nicefrac}
\usepackage{commath}
\usepackage{bbm}
\usepackage{boxedminipage}
\usepackage{xparse}
\usepackage{xcolor}
\usepackage{float}
\usepackage{multirow}
\usepackage{makecell}
\usepackage{graphicx}
\usepackage{caption}
\usepackage{subcaption}
\usepackage{ifthen}
\usepackage{thm-restate}
\usepackage{algorithm}
\usepackage[noend]{algpseudocode}
\usepackage{colortbl} 
\usepackage{fancyhdr}   
\usepackage{hyperref}
    \hypersetup{ colorlinks=true, linkcolor=blue, citecolor=magenta, urlcolor=blue,}
\usepackage{fullpage}
\usepackage[utf8]{inputenc}
\usepackage{environ}
\usepackage{mdframed}
\usepackage{cleveref}
\usepackage[short]{optidef} 
\usepackage{tikz}
\usetikzlibrary{shapes}
\usetikzlibrary{positioning}
\usetikzlibrary{fit}
\usetikzlibrary{graphs,graphs.standard}
\usepackage{enumitem}

\newtheorem{proposition}{Proposition}[section]
\newtheorem{lemma}{Lemma}[section]
\newtheorem{remark}{Remark}[section]
\newtheorem{theorem}{Theorem}[section]

\newtheorem{claim}{Claim}[section]

\usepackage{enumitem}

\NewEnviron{problem}[1]{%
	\begin{center}\fbox{\parbox{6in}{%
				{\centering\scshape #1\par}%
				\parskip=1ex
				\everypar{\hangindent=1em}%
				\BODY
}}\end{center}}

\newcommand{\cC}{\ensuremath{{\mathcal C}}\xspace}

\newcommand{\cT}{\ensuremath{{\mathcal T}}\xspace}

\newcommand{\stpqFGC}{$(p,q)$-FlexST\xspace}
\newcommand{\pqFGC}{$(p,q)$-FGC\xspace}
\newcommand{\pqFGCAug}{$(p,q)$-FGC-Aug\xspace}

\newcommand{\pqFlexC}{$(p,q)$-Flex-Connected\xspace}
\newcommand{\pqminusoneFlexC}{$(p,q-1)$-Flex-Connected\xspace}
\newcommand{\rootedCAug}{Rooted-Conn-Aug\xspace}

\title{A $(p+q)^{O(pq)}$-approximation for $(p, q)$-Flexible Graph Connectivity}

\author{
Karthekeyan Chandrasekaran\thanks{University of Illinois, Urbana-Champaign, USA, Email: \{karthe, ryjiang2, kk17\}@illinois.edu. Karthekeyan is supported in part by NSF grant CCF-2402667. 
}
\and Raymond Jiang\footnotemark[1]
\and Krishna Kalathur\footnotemark[2]
}
\date{}
\begin{document}
\maketitle
\begin{abstract}
    In the $(p,q)$-Flexible Graph Connectivity problem, the input consists of non-negative integers $p$ and $q$ and a graph $G=(V, E)$ whose edges are classified into safe and unsafe edges with non-negative edge costs. 
    A subgraph $H$ of $G$ is \pqFlexC if every non-empty proper subset $B$ of vertices has either at least $p$ safe edges or at least $p+q$ total edges crossing it. 
    The goal is to find a minimum cost subset $F\subseteq E$ of edges such that the subgraph $(V, F)$ is \pqFlexC. 
    We give a $(p+q)^{O(pq)}$-approximation for this problem, which in particular implies a constant approximation for every fixed constants $p$ and $q$. 
    We achieve this by designing a $(p+q)^{O(pq)}$-approximation for the 
    augmentation problem of finding a minimum cost subset of edges to add to make a \pqminusoneFlexC graph into a \pqFlexC graph. Underlying the augmentation algorithm is a  structural result showing that all deficient cuts can be represented by min rooted-cuts in a $(p+q)^{pq}$-sized collection of digraphs. This structural result was discovered by ChatGPT Astra. 
\end{abstract}


\section{Introduction}\label{sec:intro}
We consider $(p, q)$-Flexible Graph Connectivity problem. We set up the notation to describe the problem. Let $G=(V, E)$  be a graph whose edges are partitioned into ``safe'' and ``unsafe'' edges. 
For a subset $B\subseteq V$ of vertices and a subset $F\subseteq E$, we write $\delta_F(B)$ to be the subset of edges of $F$ crossing $B$, i.e., the subset of edges of $F$ with exactly one end-vertex in $B$. 
For positive integers $p, q\ge 0$, the graph $G$ is \pqFlexC if every non-empty proper subset $B\subseteq V$ has either at least $p$ safe edges crossing it or at least $p+q$ total edges crossing it. 
We now define the $(p, q)$-Flexible Graph Connectivity problem, abbreviated \pqFGC. 

\begin{center}
\fbox{%
\begin{minipage}{0.95\textwidth}
\textbf{\pqFGC:}

\textbf{Given:} Non-negative integers $p$ and $q$, undirected graph $G=(V, E=S\uplus U)$ with costs $c: E\rightarrow \R_{\ge 0}$.

\textbf{Goal:} Find a subset $F\subseteq E$ with minimum $\sum_{e\in F}c_e$ such that $(V, F)$ is \pqFlexC. 

\end{minipage}}
\end{center}

$(p=1, q=1)$-FGC was introduced by Adjiashvili, Hommelsheim, and M\"{u}hlenthaler \cite{AHM2022} to model non-uniform failure models. For general $p$ and $q$, \pqFGC was introduced by Boyd, Cheriyan, Haddadan, and Ibrahimpur \cite{BCHI24}. 
Subsequent to the introduction of \pqFGC, several other models of network design for non-uniform failure models have been introduced and studied, but we do not describe those models here---see \cite{AHMS22, BCHI24, HJS24, CJ25, HLMZ25}. \pqFGC generalizes the $k$-edge-connected spanning subgraph problem, which is APX-hard \cite{Fer98} and admits a $2$-approximation \cite{Jain2001}. 
It is known that feasibility testing of a given instance of \pqFGC can be done in polynomial time via enumeration of approximate min-cuts \cite{BCHI24}. 
Adjiashvili, Hommelsheim, and M\"{u}hlenthaler showed constant-factor approximation for $(1, 1)$-FGC. 
After that initial result, several works \cite{AHM2022, BCGI23, BCHI24, HJS24, CJ25, BCKS25, Nut25, Ban25, Nut25-tight, HLMZ25} have designed constant-factor approximations for various regimes of $p$ and $q$. 
Ibrahimpur and V\'{e}gh \cite{IV26-j} recently designed an $O(\log{n})$-approximation for all $p$ and $q$. 
Beyond approximation guarantees, \pqFGC has reinvigorated the field of network design through novel concepts, problems, and results for classic problems.  
It drove extensions of the primal-dual method beyond uncrossable families through the introduction of pliable families \cite{BCGI23, Ban25, Nut25, Nut25-tight}, motivated the study of small-cut covering problems \cite{BCGI23, BCKS25}, and led to improved approximations for the capacitated $k$-edge-connected spanning subgraph problem \cite{BCGI23, Ban25, BCKS25}. 

An intriguing central question concerning \pqFGC is whether, for all $p$ and $q$, there exists an $\alpha(p,q)$-approximation for some function $\alpha(p, q)$ that depends only on $p$ and $q$. Such an approximation would imply a constant-factor approximation for every fixed constant $p$ and $q$. We resolve this question affirmatively in this work. The key ideas behind this work were discovered by ChatGPT Astra. We simplify the ideas and present them from our viewpoint. 

As mentioned above, feasibility can be verified in polynomial time. 
We henceforth assume that all input instances are feasible. To state our approximation factor, we use $L(p, 0):=1$, 
\begin{align*}
    L(p, q)&:=\prod_{\ell=p}^{p+q-1}\binom{\ell}{p-1}\ \forall\ q\ge 1, \text{ and}\\
    \alpha(p, q) &:=\sum_{j=0}^q jL(p, j)\ \forall\ p, q\ge 0. 
\end{align*}
We observe that $L(p, q)\le(p+q)^{(p-1)q}$ and $\alpha(p, q)=(p+q)^{O(pq)}$. We show the following result. 

\begin{theorem}\label{thm:pqFGC}
    There exists a randomized polynomial-time algorithm for \pqFGC, that returns a feasible solution $F$ with $c(F)\le O(\alpha(p, q))OPT$ with constant probability, where OPT is the optimum solution cost. 
\end{theorem}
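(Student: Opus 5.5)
The plan is iterative augmentation over $q$, using the structural result and the augmentation algorithm the abstract describes.

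\textbf{Base case.} For $q=0$, a graph is $(p,0)$-Flex-Connected exactly when every cut has at least $p$ edges. Hence $(p,0)$-FGC is the $p$-edge-connected spanning subgraph problem, with safe and unsafe edges treated alike. Jain's iterative rounding \cite{Jain2001} gives a $2$-approximation. Call the output $F_0$; then $c(F_0)\le 2\,OPT$.

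\textbf{Augmentation stages.} For $j=1,\dots,q$, the current set $F_{j-1}$ is $(p,j-1)$-Flex-Connected. We solve the augmentation problem of adding edges $A_j\subseteq E\setminus F_{j-1}$ so that $F_j:=F_{j-1}\cup A_j$ is $(p,j)$-Flex-Connected.

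\emph{Deficient cuts.} The sets needing coverage are the deficient cuts $B$ with at most $p-1$ safe edges and exactly $p+j-1$ total edges of $F_{j-1}$. Adding one edge across any such $B$ fixes it: that edge either raises the total to $p+j$ or, if safe, raises the safe count to $p$. So stage $j$ is a set-cover-type problem over the family of deficient cuts.

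\emph{Structural step.} This family can be represented as min rooted-cuts in a collection of $L(p,j)$ digraphs. The construction I would try:
\begin{itemize}
\item A deficient cut has exactly $p+j-1$ crossing edges, of which at most $p-1$ are safe.
\item Guess, through a sequence of $j$ levels with $\ell=p,\dots,p+j-1$, which $p-1$ of $\ell$ edges to "protect". This gives the count $\prod_{\ell}\binom{\ell}{p-1}$.
\item Orient or contract edges so that each deficient cut becomes a minimum $r$-rooted cut in one of the resulting digraphs.
\end{itemize}

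\emph{Covering within one digraph.} Minimum rooted cuts in a digraph with in-degree requirement $k$ form an intersecting (crossing) family. Covering them is therefore a rooted connectivity augmentation problem, solvable exactly in polynomial time by Frank's matroid-intersection/submodular-flow methods. I would then take $A_j$ as the union of the $L(p,j)$ per-digraph augmentations.

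\emph{Cost per digraph.} Restrict an optimum solution $F^*$ to the new edges, each oriented both ways if needed. It is a feasible rooted augmentation for each digraph, because $F^*$ is $(p,q)$-Flex-Connected and hence covers every deficient cut. So each digraph costs at most $OPT$, and $c(A_j)\le L(p,j)\cdot OPT$ up to a constant factor for bidirection.

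\textbf{Total cost.} Summing gives $c(F_q)\le (2+O(\sum_j L(p,j)))OPT$. This is within the claimed $O(\alpha(p,q))OPT$, because $\alpha$ weights level $j$ by $j$. That weight is presumably slack from an argument that reuses lower-level structure, or from a coarser charging in which cuts at level $j$ are counted once per earlier level.

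\textbf{Randomization.} The randomness likely enters through the construction of the digraph collection. A natural source is random contraction in the style of Karger, to enumerate or sample structures related to approximate min-cuts, or random choices of the protected edge sets. The claimed algorithm succeeds with constant probability, and I would boost it by repetition, verifying feasibility each time using the polynomial-time feasibility test cited from \cite{BCHI24}.

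\textbf{Main obstacle.} The crux is the structural lemma: every deficient cut is a min rooted-cut in at least one of at most $(p+q)^{pq}$ digraphs, while each digraph's min rooted-cuts are all genuinely deficient, or at least covered by $OPT$. I would first try to prove it by induction on $j$. The idea is to fix the root, orient the $F_{j-1}$ edges consistently with the nested guesses of safe-edge subsets, and show via uncrossing of deficient cuts that the cuts consistent with a given guess form a crossing family.
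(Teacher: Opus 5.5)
\textbf{There is a genuine gap: the structural lemma is not proved, and the algorithm as stated is not polynomial-time.}

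Your overall skeleton matches the paper's. You take a $2$-approximate $p$-edge-connected subgraph via Jain, then run $q$ augmentation stages. Each stage reduces to covering the min rooted-cuts of at most $L(p,j)$ digraphs, each solved exactly after bidirecting the candidate edges (factor $2$), with $F^*\setminus F_{j-1}$ as the cost witness. But the structural lemma on which everything rests is left unproved, and your sketch lacks the idea that makes it work.

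``Guess which $p-1$ of $\ell$ edges to protect'' is not well defined uniformly over all deficient cuts, since different cuts are crossed by different edges. There is also no reason for the cuts consistent with a guess to uncross; the paper notes that uncrossability of the natural deficient families fails outside special regimes of $(p,q)$.

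In the paper, $\ell$ is not a number of edges. It is $\lambda_D$, the number of arborescences in an Edmonds packing $T_1,\dots,T_{\lambda_D}$ of the current capacitated digraph (initially $H$ bidirected with unit capacities). The packing gives a common indexing: when $\lambda_D$ is finite, every min rooted-cut of $D$ is entered by exactly one arc of each $T_i$ and by no other arc. The recursion then works as follows.
\begin{itemize}
\item For each $(p-1)$-subset $I$, set to $\infty$ the capacity of every safe arc in $\bigcup_{i\notin I}T_i$.
\item Store $D_I$ if $\lambda_{D_I}=\lambda_D$, recurse if $\lambda_D<\lambda_{D_I}\le k$, and discard it if $\lambda_{D_I}>k$.
\end{itemize}

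\emph{Soundness.} A min rooted-cut of a stored $D_I$ has finite in-capacity, so no promoted arc enters it, and it is a min rooted-cut of the parent. Its safe entering arcs must therefore lie in $\bigcup_{i\in I}T_i$, so there are at most $p-1$ of them. Then $(p,q-1)$-flex-connectivity forces exactly $k$ crossing edges.

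\emph{Completeness.} A deficient cut with no promoted entering arc has at most $p-1$ safe entering arcs, each in at most one $T_i$. So some $I$ keeps them all finite, and $\lambda_{D_I}\le k$.

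\emph{Count.} The bound $\prod_\ell\binom{\ell}{p-1}$ follows because $\lambda$ strictly increases along recursive calls, from at least $p$ to at most $k$. Min rooted-cuts of a digraph automatically form an intersecting family, so no uncrossing of deficient cuts is ever needed.

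Second, your algorithm is not polynomial-time as stated. It builds and processes $\sum_j L(p,j)$ digraphs, and $p,q$ are part of the input. For instance $L(2,q)=(q+1)!$, which is already superpolynomial when $q$ is of order $\log n$.

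The paper runs the (deterministic) digraph-based algorithm only when $\alpha(p,q)\le\log n$, so that its $qL(p,q)$ times polynomial running time is polynomial. Otherwise it calls the randomized $O(\log n)$-approximation of Ibrahimpur and V\'{e}gh, which in that regime is already an $O(\alpha(p,q))$-approximation. That fallback is the only source of randomness. Your guesses about Karger-style contraction or random protected sets do not match: the representative construction is deterministic. As you suspected, the weight $j$ in $\alpha$ is slack. It comes from stating the per-stage guarantee as $O(jL(p,j))$ so that it also covers the fallback case $jL(p,j)>\log n$.
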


To prove Theorem \ref{thm:pqFGC}, we address the following augmentation problem: 

\begin{center}
\fbox{%
\begin{minipage}{0.95\textwidth}
\textbf{\pqFGCAug:}

\textbf{Given:} Non-negative integers $p$ and $q$, 
undirected graph $H=(V, E=S\uplus U)$ where edges in $S$ are safe and edges in $U$ are unsafe such that $H$ is \pqminusoneFlexC, 
a set $N$ of undirected edges over vertex set $V$ along with a classification into safe and unsafe edges, and non-negative costs $c: N\rightarrow \R_{\ge 0}$.

\textbf{Goal:} Find a subset $F\subseteq E$ with minimum $\sum_{e\in F}c_e$ such that $(V, E\cup F)$ is \pqFlexC. 

\end{minipage}}
\end{center}

We recall that feasibility testing can be done in polynomial time via enumeration of approximate min-cuts---e.g., see \cite{BCHI24}. We will henceforth assume that all input instances for the augmentation problem are also feasible. 
We clarify a subtle nuance in the augmentation problem that renders the precise classification of the edges of $N$ into safe and unsafe edges irrelevant. To see this, it is helpful to reformulate \pqFGCAug as a cut-covering problem. We break symmetry by fixing a root vertex $r\in V$ and focus on deficient-cuts that do not contain $r$ as defined below: 
\[
\cT:=\left\{\emptyset\neq B\subseteq V-r: |\delta_E(B)|=p+q-1, |\delta_{S}(B)|\le p-1\right\}. 
\]
It is known that for a subset $F\subseteq N$, the graph $(V, E\cup F)$ is \pqFlexC if and only if $|\delta_F(B)|\ge 1$ for every $B\in \cT$ (e.g., see \cite{BCHI24, CJ25}). 
Consequently, the precise classification of the edges of $N$ into safe and unsafe edges is irrelevant. We show the following result for \pqFGCAug. 

\begin{theorem}\label{thm:pqFGCAug}
    There exists a randomized polynomial-time algorithm for \pqFGCAug, that returns a feasible augmentation $F$ with $c(F)= O(qL(p, q))\text{OPT}_{\text{Aug}}$ with constant probability, where $\text{OPT}_{\text{Aug}}$ is the optimum augmentation cost. 
\end{theorem}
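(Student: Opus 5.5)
The plan is to reduce covering $\cT$ to a bounded number of directed rooted-connectivity augmentation problems, following the structural result mentioned in the abstract. For each $B\in\cT$ we have $|\delta_E(B)|=p+q-1$ and $|\delta_S(B)|\le p-1$. Since $H$ is \pqminusoneFlexC, every cut crossing fewer than $p$ safe edges has at least $p+q-1$ edges. So $\cT$ consists exactly of the \emph{minimum} cuts among those with at most $p-1$ safe edges.

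\textbf{Step 1: the digraph family.} I will build a family $\mathcal{D}$ of at most $L(p,q)$ digraphs (up to a factor $q$) on $V$ together with $N$. Each $D\in\mathcal{D}$ comes from guessing which $p-1$ of the crossing edges are the "safe-ish" ones to be discounted. The product form $\prod_{\ell=p}^{p+q-1}\binom{\ell}{p-1}$ suggests a recursion on the excess $\ell-p$. At level $\ell$, a cut with $\ell$ crossing edges is charged by choosing $p-1$ of them to ignore. Concretely, I would orient and select edges of $E$ so that in $D$, every $B\in\cT$ has in-degree exactly some value $k_D$ from $V\setminus B$ into $B$, i.e.\ it is a minimum $r$-rooted cut. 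Every non-deficient $B$ should instead have in-degree at least $k_D+1$. Safe edges would be kept with multiplicity and unsafe edges with a suitable orientation.

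The claim to prove is therefore: every $B\in\cT$ is a min $r$-rooted cut in some $D\in\mathcal{D}$, and in every $D$ every min rooted cut lies in $\cT$ (or at least is deficient). The second property is what makes covering min rooted cuts of every $D$ equivalent to covering $\cT$. I would prove the first direction by induction on $q$. A deficient cut for $(p,q)$ either has a witness structure inherited from the $(p,q-1)$ family, or it forces a choice of which $p-1$ of $p+q-1$ edges are special. That choice is where the $\binom{p+q-1}{p-1}$ factor enters.

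\textbf{Step 2: solving each piece.} For a fixed $D$, covering all min $r$-rooted cuts by adding edges of $N$ is rooted connectivity augmentation by one: a link $uv$ covers $B$ if it enters $B$. The family of min rooted cuts in a digraph is intersecting-submodular, hence crossing/intersecting. Its covering LP is integral, or at least admits an $O(1)$-approximation; for undirected links used in both directions, one can apply Frank's result on increasing rooted connectivity, or a 2-approximation via bidirection. The $O(q)$ factor would come from performing the augmentation in $q$ rounds, with the excess $j$ ranging over $0,\dots,q$. It could equally come from randomized contraction, which is where constant success probability enters. This randomization is used to enumerate the deficient cuts, which are approximate min-cuts, when constructing $\mathcal{D}$ in polynomial time.

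\textbf{Step 3: combining.} Take the union of the solutions over all $D\in\mathcal{D}$. Each piece costs $O(\text{OPT}_{\text{Aug}})$, because any feasible augmentation covers all min rooted cuts of $D$ by the first property. Summing gives $O(q L(p,q))\,\text{OPT}_{\text{Aug}}$.

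The main obstacle is Step 1: designing the digraphs so that deficient cuts are exactly minimum rooted cuts, with the counting matching $L(p,q)$. I would first try the case $p=1$, where $L=1$ and the deficient cuts are just the global min cuts of $H$ with no safe edges. I would then attempt a guessing scheme in which a random or enumerated ordering of the edges determines which $p-1$ crossing edges receive extra weight.
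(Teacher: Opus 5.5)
\textbf{There is a genuine gap: Step 1, which carries the whole result, is not supplied.} Your Steps 2 and 3 match the paper's reduction: bidirect $N$, solve a rooted-connectivity augmentation in each digraph at cost at most $2\,\mathrm{OPT}_{\mathrm{Aug}}$, and take the union. The idea you float for Step 1 cannot work as stated. Choosing which $p-1$ crossing edges to ignore is a per-cut choice, and $|\cT|$ is not bounded in terms of $p,q$, so it does not produce $L(p,q)$ digraphs. You need one global object that indexes the crossing arcs of all relevant cuts at once.

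The missing idea is a maximum arborescence packing $T_1,\ldots,T_{\lambda_D}$ in the current digraph $D$. Start from $H$ bidirected with unit capacities, and only ever raise capacities of safe arcs to $\infty$. For each $I\subseteq[\lambda_D]$ with $|I|=p-1$, promote every safe arc of $\bigcup_{i\notin I}T_i$.
\begin{itemize}
\item A deficient cut with no promoted in-arc has at most $p-1$ safe in-arcs. Each has capacity one, so each lies in at most one $T_i$. Hence some $I$ leaves the cut untouched, and then $\lambda_{D_I}\le k$.
\item Conversely, suppose $\lambda_{D_I}=\lambda_D$. Every min rooted-cut $B$ of $D_I$ has finite in-degree, so it has no promoted in-arcs. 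It is therefore tight for the packing: $|\delta^{in}_{T_i}(B)|=1$ for all $i$, and every in-arc lies in exactly one $T_i$. So its safe in-arcs all lie in $\bigcup_{i\in I}T_i$ and number at most $p-1$. Then $(p,q-1)$-flex-connectivity forces $|\delta_E(B)|=d^{in}_{D_I}(B)=k$, i.e., $B\in\cT$.
\item If instead $\lambda$ strictly increases, the old packing certifies nothing. You recompute a packing and recurse, discarding $D_I$ when $\lambda_{D_I}>k$.
\end{itemize}
Branching $\binom{\lambda_D}{p-1}$ ways while $\lambda_D$ climbs from $p$ to $k=p+q-1$ gives exactly $\prod_{\ell=p}^{p+q-1}\binom{\ell}{p-1}$. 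Your counting intuition is right, but the $\ell$ objects are arborescences, not edges of a particular cut. Without this or a substitute, neither direction of your Step 1 claim is established. In particular, the crucial direction, that every min rooted-cut of every $D$ is deficient, is missing.

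\textbf{Your account of randomization, the factor $q$, and running time is also off.} Since $p$ and $q$ are part of the input, $L(p,q)$ can be super-polynomial. So even with Step 1 you cannot afford to build and solve $L(p,q)$ instances in general. The paper branches:
\begin{itemize}
\item If $qL(p,q)>\log n$, it runs the randomized $O(\log n)$-approximation of Ibrahimpur--V\'{e}gh, with edges of $H$ at cost $0$. Its ratio is then $O(qL(p,q))$.
\item Otherwise, the deterministic construction above runs in $qL(p,q)\cdot\mathrm{poly}\le\log n\cdot\mathrm{poly}$ time with ratio $2L(p,q)$.
\end{itemize}
This fallback is the only source of randomness and of the factor $q$. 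It has nothing to do with $q$ rounds of augmentation; that belongs to Theorem~\ref{thm:pqFGC}. Nor does it involve enumerating deficient cuts by random contraction. For $\frac{p+q-1}{p}$-approximate min-cuts that only gives an $n^{O((p+q)/p)}$ bound, and it is never needed, since the construction does not list $\cT$.

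A smaller slip: in Step 3, the bound $c'(F'_D)\le 2\,\mathrm{OPT}_{\mathrm{Aug}}$ needs your second property, that every min rooted-cut of $D$ lies in $\cT$. The first property is what gives feasibility.
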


Theorem \ref{thm:pqFGC} follows from Theorem \ref{thm:pqFGCAug}: we obtain an initial min-cost $p$-edge-connected subgraph with a $2$-factor loss and iteratively augment to achieve $(p, i)$-Flex-Connected in the $i$'th iteration for each $i=1,\ldots, q$ using the algorithm in Theorem \ref{thm:pqFGCAug}. See Section \ref{sec:fgc-thm-proof} for details. We focus on Theorem \ref{thm:pqFGCAug} henceforth. 

Ibrahimpur and V\'{e}gh \cite{IV26-j} gave a randomized $O(\log{n})$-approximation for \pqFGC. It also implies a randomized $O(\log{n})$-approximation for \pqFGCAug by giving a cost of zero to the edges of $H$. Thus, if $qL(p, q)>\log{n}$, then Theorem \ref{thm:pqFGCAug} is implied by Ibrahimpur-V\'{e}gh's result. 
Consequently, we may henceforth assume that the input instance is feasible and $qL(p, q)\le \log{n}$. 
The rest of the proof of Theorem \ref{thm:pqFGCAug} is via the following core result that is a deterministic algorithm. We note that randomization in the algorithm of Theorem \ref{thm:pqFGCAug} arises only from the use of Ibrahimpur-V\'{e}gh's algorithm for the case of $qL(p, q)>\log{n}$.

\begin{theorem}\label{thm:pqFGCAug-with-runtime}
    There exists a deterministic algorithm for \pqFGCAug that returns a feasible augmentation $F$ with $c(F)\le 2L(p, q)OPT_{aug}$, where $OPT_{aug}$ is the cost of an optimum solution and its run-time is $qL(p, q)$ times a polynomial in the input size. 
\end{theorem}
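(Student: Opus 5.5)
Our plan is to cover the family $\cT$ of deficient cuts with a union of $L(p,q)$ subfamilies. Each subfamily will be exactly the family of min $r$-rooted cuts in some auxiliary digraph. For each subfamily, covering its cuts with edges of $N$ is then an instance of \rootedCAug: augmenting a digraph $D_i$ whose min in-cut value toward $V-r$ is $k_i$ so that it becomes $(k_i+1)$-rooted-connected. We solve each instance by the standard $2$-approximation (Frank's algorithm for rooted connectivity augmentation, or an LP/bidirection argument): bidirect each edge of $N$ and find a min-cost augmentation. The output $F$ is the union of these augmentations. Every part of $\cT$ is covered by $OPT_{aug}$, so each piece costs at most $2\,OPT_{aug}$, and the total cost is at most $2L(p,q)OPT_{aug}$. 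The running time is the number of digraphs, $L(p,q)$, times a polynomial. The extra factor $q$ will come from how the digraphs are generated level by level.

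The key structural step is building the digraphs. We start from a set $B\in\cT$. It has $|\delta_E(B)|=p+q-1$ and at most $p-1$ safe edges, so at least $q$ of its crossing edges are unsafe. We then "guess" which $p-1$ edges of $\delta_E(B)$ play the role of the safe part, i.e.\ choose $p-1$ of them. The remaining edges can be deleted in the digraph, or given weights, so that $B$ becomes a min rooted-cut of value $p-1$. We want a consistent way to choose such digraphs without knowing $B$.

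We would proceed inductively on $\ell=p,\dots,p+q-1$, matching the product $\prod_{\ell}\binom{\ell}{p-1}$. At stage $\ell$ we hold a digraph whose min rooted cuts of value $\ell$ capture the current family. We branch over $\binom{\ell}{p-1}$ choices that determine which edges to contract or orient so as to reduce to the next level. Concretely, we orient $E$ bidirected with safe edges given weight $p+q$ relative to unsafe edges. We then use the fact that $H$ is $(p,q-1)$-flex-connected, so every $B\in\cT$ is a min cut of a suitable weighted graph. Branching picks a spanning structure, for example an arborescence-like selection of $p-1$ in-edges, that certifies tightness.

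The main obstacle is proving that this branching covers all of $\cT$ and that each branch is genuinely a family of \emph{minimum} rooted cuts, so that Frank-type augmentation applies. Put differently, the hard part is showing that the enumeration of $\binom{\ell}{p-1}$ choices per level, independent of $B$, suffices. Our first attempt would be an uncrossing argument: two deficient sets sharing the same guessed edge pattern have deficient intersection and union. That would make each branch a crossing family representable as min rooted-cuts.
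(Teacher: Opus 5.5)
Your outer reduction is sound and matches the paper's final step. Suppose you have at most $L(p,q)$ digraphs such that every deficient cut is a min rooted-cut of one of them \emph{and} every min rooted-cut of each of them is deficient. Then solving \rootedCAug on the bidirection of $N$ for each digraph and taking the union costs at most $2L(p,q)\,OPT_{aug}$. The exactness condition is what makes the bidirected optimum feasible for each instance.

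The gap is that the content of the theorem is constructing such a collection efficiently, and your proposal does not do it. The concrete ideas you suggest do not work:
\begin{itemize}
\item Guessing $p-1$ edges of $\delta_E(B)$ and deleting the rest is tailored to a single $B$, and the number of such guesses is not bounded in terms of $p,q$. Deleting edges also lowers cut values and can create non-deficient min cuts, which breaks exactness.
\item Weighting safe edges by $p+q$ does not make deficient cuts minimum. For $p=2$, a deficient cut with one safe edge weighs $2q+2$, while a non-deficient cut with $q+2$ unsafe edges weighs $q+2$.
\item Uncrossing is the route known to fail for general $(p,q)$. If $A,B\in\cT$ cross, then $\delta(A\cap B)\subseteq\delta(A)\cup\delta(B)$ may contain up to $2(p-1)\ge p$ safe edges, so $A\cap B$ need not be deficient. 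Even a genuinely crossing family need not equal the min rooted-cut family of one digraph.
\end{itemize}

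The missing idea is a recursion driven by arborescence packing, where safe arcs are \emph{promoted} (capacity set to $\infty$) rather than deleted, contracted or reweighted. Start from $D_0$, the bidirection of $H$ with unit capacities, so $\lambda_{D_0}\ge p$. Let $k=p+q-1$.

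At a digraph $D$ with $\lambda_D\le k$, pack $\lambda_D$ arborescences $T_1,\dots,T_{\lambda_D}$. Each min rooted-cut of $D$ is entered by exactly one arc of each $T_i$, and each of its entering arcs lies in exactly one $T_i$. For every $I$ with $|I|=p-1$, promote the safe arcs of $\bigcup_{i\notin I}T_i$ to obtain $D_I$. Store $D_I$ if $\lambda_{D_I}=\lambda_D$, recurse if $\lambda_D<\lambda_{D_I}\le k$, and discard it otherwise.

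Both inclusions then follow:
\begin{itemize}
\item A deficient cut with no promoted entering arc has at most $p-1$ unit-capacity safe entering arcs. These lie in at most $p-1$ of the $T_i$, so some $I$ keeps its in-degree equal to $k$, and it is never discarded.
\item A min rooted-cut of a stored $D_I$ has no promoted entering arc, so it is also a min cut of $D$. Hence its safe entering arcs lie only in the $p-1$ arborescences indexed by $I$, one per arborescence. $(p,q-1)$-flex-connectivity then forces its size to be exactly $k$.
\end{itemize}

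Along recursive calls $\lambda$ strictly increases from at least $p$ to at most $k$. So there are at most $\prod_{\ell=p}^{k}\binom{\ell}{p-1}=L(p,q)$ stored digraphs, and the depth is at most $q$. Your per-level count $\binom{\ell}{p-1}$ matches this. However, you never say that the choice is of $p-1$ among $\ell$ packed arborescences, with the safe arcs of the others promoted, so neither inclusion of the representation is established.
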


\begin{remark}
The above results can be strengthened along three fronts: 
Firstly, randomization is not necessary. Both Theorems \ref{thm:pqFGCAug} and \ref{thm:pqFGC} can be strengthened to be a ``deterministic'' algorithm with the same approximation guarantees. For this, we need an alternative algorithm when $L(p, q)$ is large instead of using Ibrahimpur-V\'{e}gh's result. 
Secondly, the approximation factor of the deterministic algorithm can be tightened to be $2L(p,q)$ for the purposes of Theorem \ref{thm:pqFGCAug} (and not just $O(qL(p, q))$). This leads to an approximation guarantee of $2\sum_{j=0}^q L(p, j)$ in Theorem \ref{thm:pqFGC}. 
Thirdly, the approximation factor of Theorem \ref{thm:pqFGC} can be shown to be relative to 
a relaxation of Chekuri-Jain's LP-relaxation for \pqFGC \cite{CJ25}. I.e., we can drop certain constraints from Chekuri-Jain's LP-relaxation and still achieve these approximation guarantees. 
We skip these strengthenings in the interests of brevity and concise presentation. 
The rest of the paper is devoted to proving Theorem \ref{thm:pqFGCAug-with-runtime}.    
\end{remark}

\subsection{Techniques}\label{sec:techniques}
Our approach to prove Theorem \ref{thm:pqFGCAug-with-runtime} is inspired by the works of \cite{BCHI24} and \cite{CJ25}. We briefly describe their approaches. Boyd, Cheriyan, Haddadan, and Ibrahimpur \cite{BCHI24} obtained an $O(q\log{n})$-approximation for \pqFGC by a two-phase algorithm, where $n:=|V|$ is the number of vertices in the input graph: the first phase finds an initial subset of edges and the second phase augments the current solution in at most $q$ iterations to arrive at a feasible solution; each iteration corresponds to solving a set-cover instance with $n^{O(1)}$-elements, thereby incurring an $O(\log{n})$-approximation per iteration. Our result in Theorem \ref{thm:pqFGCAug-with-runtime} can be viewed as improving the $O(\log{n})$-approximation per phase to $2L(p, q)$-approximation per phase. 
Chekuri and Jain \cite{CJ25} adapted the augmentation approach to obtain an approximation that depends only on $p$ and $q$ for $(p=2, q)$-FGC, $(p, q=2)$-FGC, and $(p, q=3)$-FGC, i.e., for certain regimes of $p$ and $q$. 
Recall that the goal is to cover deficient-cuts, i.e., cuts in $\cT$. They view this problem itself as a sequence of augmentation problems: 
in iteration $i$, they augment to cover all uncovered cuts $B\in \cT$ with $|\delta_S(B)|\le i$. They show that 
the set of cuts to be covered in iteration $i$ has an \emph{uncrossable structure} for certain values of $(p, q)$ and exploit this to obtain the stated approximation. However, the uncrossability structure breaks down outside certain regimes of $(p, q)$. Extensions of the primal-dual method to pliable families have overcome the need for uncrossability in several settings \cite{Ban25, BCGI23, Nut25, Nut25-tight}, but pliability also has its limitations---see \cite{BCGI23}. Instead of crossing/pliable structure on the family of cuts to be covered, our approach shows a different structure. 

Our high-level plan to prove Theorem \ref{thm:pqFGCAug-with-runtime} is as follows: The number of deficient-cuts---i.e., $|\cT|$---need not be bounded by a function of $p$ and $q$. Nevertheless, we show a representation of all such cuts via an $L(p, q)$-sized collection of capacitated digraphs. This is the key structural result, so we set up the background and state the result. 
A \emph{capacitated digraph} $D=(V, A, w)$ is given by a vertex set $V$, arc set $A$, and non-negative integer capacities $w: A\rightarrow \Z_{\ge 0}$. Fix a root $r\in V$. For an arc set $F$ and a subset $B\subseteq V$, we define $\delta^{in}_F(B):=\{uv\in F: v\in B, u\in V\setminus B\}$, $d^{in}_D(B):=\sum_{a\in \delta^{in}_A(B)}w_a$, and the min rooted-cut capacity $\lambda_D:=\min\{d^{in}_D(B): \emptyset\neq B\subseteq V\setminus\{r\}\}$. A non-empty subset $B\subseteq V\setminus \{r\}$ is said to be a min rooted-cut in $D$ if $d^{in}_D(B)=\lambda_D$. 
We show the following structural result. 
\begin{theorem}\label{thm:representatives}
    There is an algorithm that takes as input 
    non-negative integers $p$ and $q$ and 
    a graph $H=(V, E=S\uplus U)$ where edges in $S$ are safe and edges in $U$ are unsafe such that $H$ is \pqminusoneFlexC
    and returns a collection $\cC$ of capacitated digraphs such that 
    \begin{enumerate}
        \item $|\cC|\le L(p, q)$ and 
        \item $B\in \cT$ if and only if $B$ is a minimum rooted-cut of some $D\in \cC$. 
    \end{enumerate}
    Moreover, the running time of the algorithm is $qL(p, q)$ times a polynomial in the input size. 
\end{theorem}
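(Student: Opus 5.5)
The plan is to build $\cC$ by $q$ rounds of refinement, indexed by $\ell=p,p+1,\ldots,p+q-1$. Round $\ell$ replaces each current digraph by at most $\binom{\ell}{p-1}$ new digraphs. This gives $|\cC|\le\prod_{\ell=p}^{p+q-1}\binom{\ell}{p-1}=L(p,q)$. The basic tool is Edmonds' arborescence packing. A capacitated digraph $D$ with $\lambda_D=\ell$ has $\ell$ arc-disjoint $r$-arborescences $T_1,\ldots,T_\ell$, where capacities are read as multiplicities, and these can be found in polynomial time. Every min rooted-cut $B$ of $D$ is then entered by \emph{exactly one} arc of each $T_i$. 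The $\ell$ arcs entering a min cut are therefore labelled by the $\ell$ arborescences. Each arc inherits a label safe or unsafe from its underlying edge of $H$.

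The gadget for one round is as follows. For a digraph $D$ with $\lambda_D=\ell$ and a $(p-1)$-subset $J\subseteq[\ell]$, let $D_J$ be $D$ with the capacity of every safe arc lying in some $T_i$, $i\notin J$, increased by one.
\begin{itemize}
\item A cut of $D$ with in-capacity above $\ell$ still has in-capacity above $\ell$ in $D_J$.
\item A min cut $B$ of $D$ keeps value $\ell$ in $D_J$ if and only if all its entering arcs from arborescences outside $J$ are unsafe.
\end{itemize}
Taking the union over all $J$, the cuts that stay at value $\ell$ are exactly the min cuts of $D$ with at most $p-1$ safe entering arcs. All other min cuts are pushed to value at least $\ell+1$.

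The case $q=1$ already works this way. Here $H$ is $(p,0)$-Flex-Connected, i.e.\ $p$-edge-connected. Start with the bidirected copy $D_0$ of $H$, with unit capacities, so $\lambda_{D_0}=p$. Its min rooted-cuts are exactly the cuts with $|\delta_E(B)|=p$. The $p$ digraphs $D_J$ then have as min cuts precisely the members of $\cT$.

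For general $q$, I would maintain the following invariant after round $\ell$. Every $D$ in the current collection has $\lambda_D\ge \ell+1$. Every $B\in\cT$ has $d^{in}_D(B)=|\delta_E(B)|=p+q-1$ in at least one $D$ of the collection. Every cut $B$ with $d^{in}_D(B)<p+q-1$ has at least $p$ safe edges in $\delta_S(B)$.

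Because $H$ is \pqminusoneFlexC, any cut with at most $p-1$ safe edges has at least $p+q-1$ edges. So for $\ell<p+q-1$, every min cut of value $\ell$ has at least $p$ safe entering arcs. Every such cut is killed by every $D_J$, since some safe entering arc belongs to an arborescence outside $J$, and so $\lambda_{D_J}\ge \ell+1$. Digraphs with $\lambda_D>\ell$ are carried forward unchanged; they count as one of the $\binom{\ell}{p-1}\ge1$ allowed copies. In the final round $\ell=p+q-1$, the gadget selects exactly the value-$(p+q-1)$ cuts with at most $p-1$ safe arcs. Together with the invariant, $d^{in}_D(B)\ge|\delta_E(B)|$ for every $D$, this identifies them with $\cT$. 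The running time is $L(p,q)$ digraphs times $q$ rounds of polynomial-time arborescence packing, as claimed.

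The main obstacle is preserving the second part of the invariant: a fixed $B\in\cT$ must stay \emph{unboosted}, with $d^{in}_D(B)=p+q-1$, in some surviving digraph. In early rounds $B$ is not a min cut of $D$. Its entering arcs can then meet several arborescences, or meet one arborescence several times. It is not clear that some $J$ leaves all of its at most $p-1$ safe arcs untouched, since they may be spread over more than $p-1$ arborescences.

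My first attempt would be to avoid this problem by boosting more selectively. In each $D_J$, increase capacity only on safe arcs of arborescences $T_i$, $i\notin J$, that enter \emph{some min cut}. Alternatively, exploit the freedom in choosing the packing to use an uncrossing argument on the tight family. Either way, the goal is to show that $B$'s safe arcs lie in at most $p-1$ of the relevant arborescences, so that $J$ can be chosen to contain them. If that fails, I would instead add the boost on unsafe arcs of the arborescences in $J$ and re-verify the two cases above. This step is where I expect the real work to lie.
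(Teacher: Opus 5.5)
Your construction is the paper's in all essentials. You pack $\lambda_D$ arborescences, create one child per $(p-1)$-subset $J$, and raise the capacities of the safe arcs in $\bigcup_{i\notin J}T_i$. Running this in fixed rounds with carry-forward, and raising capacities by one rather than to $\infty$, are harmless variations. Your count is correct, and so is your argument that for $\ell<p+q-1$ every $D_J$ kills all min cuts of $D$. But the proof is incomplete exactly at the step you single out. You never show that each $B\in\cT$ stays unboosted in some surviving digraph, and the reason you give for doubting it is mistaken. Without this step, the direction ``every $B\in\cT$ is a min rooted-cut of some $D\in\cC$'' is unproven.

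The missing observation is the paper's Claim~\ref{claim:survival}. Suppose $d^{in}_D(B)=|\delta_E(B)|$. Capacities only grow from $1$, so every arc entering $B$ has capacity exactly $1$. It therefore lies in at most one $T_i$ of any capacity-respecting packing. Hence the at most $p-1$ safe arcs entering $B$ (one per edge of $\delta_S(B)$) lie in at most $p-1$ arborescences. This holds whether or not $B$ is a min cut, and an arborescence entering $B$ several times only lowers the count. Since $\lambda_D\ge p$, some $(p-1)$-subset $J$ contains all these indices. For that $J$, no arc entering $B$ is boosted in $D_J$, so the invariant passes to $D_J$. In the last round, the same choice of $J$ keeps $B$ at value $p+q-1=\lambda_D=\lambda_{D_J}$, so $B$ is a min cut of $D_J$.

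None of your proposed workarounds is needed, and the last one would be harmful. Suppose you boost unsafe arcs of $T_i$ with $i\in J$, and some $B\in\cT$ with $|\delta_S(B)|<p-1$ is a min cut of the current digraph. Then for every $J$, some $T_i$ with $i\in J$ enters $B$ through an unsafe arc, so $B$ would be lost. Separately, at the end you must discard every digraph with $\lambda>p+q-1$, including carried-forward ones. Otherwise the direction ``every min rooted-cut of a returned digraph lies in $\cT$'' fails.
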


Theorem \ref{thm:representatives} shows that there exists an $L(p,q)$-sized collection of representatives for deficient-cuts. Once we have such a collection, the rest of the algorithm to prove Theorem \ref{thm:pqFGCAug-with-runtime} is straightforward, and we describe it now. Consider a capacitated digraph $D\in \cC$. The relevant problem that we need to solve now for each $D$ is the following: 
\begin{align}
\min\left\{c(F): F\subseteq N, |\delta_F(B)|\ge 1\ \forall \text{ min rooted-cut }B \text{ of } D\right\}.\label{eq:rooted-cut-covers}
\end{align}
We obtain a $2$-approximation for this problem as follows: let $A'$ denote the set of arcs obtained by bidirecting all edges of $N$ and setting costs $c'(e_1)=c'(e_2)=c(e)$ for both orientations $e_1$ and $e_2$ of $e$. Consider the following problem: 
\begin{align}
\min\left\{c'(F'): F'\subseteq A', |\delta^{in}_{F'}(B)|\ge 1\ \forall \text{ min rooted-cut }B \text{ of } D\right\}. \label{eq:rooted-cut-augmentation}
\end{align}
This is the min rooted-connectivity augmentation problem (i.e., we would like to find a min-cost set of arcs to add to increase the min rooted-cut capacity by at least one) and it is solvable in polynomial time via the results of Edmonds. An optimal solution $F_{opt}$ for the input instance of \pqFGCAug yields a feasible solution $F'_{opt}$ for min rooted-connectivity augmentation problem associated with $D$ with $c'(F'_{opt})=2c(F_{opt})=2OPT_{aug}$: take $F'_{opt}$ to be both orientations of every edge in $F_{opt}$. Consequently, solving \eqref{eq:rooted-cut-augmentation} gives an optimum solution $F_D'$ with $c'(F_D')\le c'(F'_{opt})\le 2OPT_{aug}$. Now, consider the set $F_D\subseteq N$ consisting of those edges with at least one orientation in $F_D'$. Then, $F_D$ is feasible for \eqref{eq:rooted-cut-covers} and has $c(F_D)\le c'(F_D')\le 2OPT_{aug}$. Thus, the set $F:=\cup_{D\in \cC}F_D$ is feasible for \pqFGCAug and has cost $c(F)\le 2L(p, q) OPT_{aug}$; the last inequality is because $|\cC|\le L(p, q)$. 

To prove the structural result stated in Theorem \ref{thm:representatives}, we start from the capacitated digraph $D_0$ obtained by bidirecting all edges of $H$ and declare all arc capacities to be unit. We recursively increase capacities of certain safe-arcs to infinity to obtain the desired collection. To determine which safe-arc capacities to increase, we rely on a maximum packing of arborescences in the current capacitated digraph. 

The proof of Theorem \ref{thm:representatives} is inspired by Chekuri and Jain's techniques for \stpqFGC. In \stpqFGC, the input consists of the input of \pqFGC and additionally consists of a distinct pair of vertices $s$ and $t$ and the goal is to find a minimum cost subset $F$ of edges such that $|\delta_{S\cap F}(B)|\ge p$ or $|\delta_F(B)|\ge p+q$ for every $s\in B\subseteq V-\{t\}$. 
Chekuri and Jain \cite{CJ25} designed an algorithm for \stpqFGC that runs in time $n^{O(p+q)}$ and achieves an approximation factor of $(p+q)^{O(p)}$ for every $p$ and $q$ satisfying $p+q>pq/2$. They start with an initial solution that is feasible for $(p=0, q)$, and in the $i$'th iteration, augment the current solution to obtain a feasible solution for $(p=i, q)$. To solve the augmentation problem, they show a representation family for deficient-cuts by packing $s$-$t$ paths and considering $(i-1)$-sized subsets of these packed paths. In our setting, we will pack arborescences and consider $(p-1)$-sized subsets of packed arborescences. Our construction of representatives is via recursion in contrast to Chekuri-Jain: for each $(p-1)$-sized subset, we increase the capacities of all safe arcs in the union of arborescences that are \emph{not} indexed by that subset; after increasing, we choose to store or discard or recurse from that digraph based on its min rooted-cut capacity relative to the current digraph. The recursive increase of safe-arc capacities is the key technical departure in our approach compared to the techniques of Chekuri-Jain.

\section{Digraph Preliminaries}\label{sec:prelims}
We recall the relevant preliminaries on digraphs that we need. Throughout, graph denotes an undirected graph. Bidirecting an undirected edge $uv$ replaces it with opposite arcs $u\rightarrow v$ and $v\rightarrow u$. We denote $[k]:=\{1, 2, \ldots, k\}$. For a vector $c: N\rightarrow \R_{\ge 0}$ and a subset $F\subseteq N$, we write $c(F):=\sum_{e\in F}c_e$. 

We recap the digraph notions defined in Section \ref{sec:techniques}
Let $D=(V, A, w: A\rightarrow \Z_{\ge 0})$ be a  \emph{capacitated digraph} given by vertex set $V$, arc set $A$, and non-negative integer capacities $w: A\rightarrow \Z_{\ge 0}$. Fix a root $r\in V$. For an arc set $F$ and a vertex-subset $B\subseteq V$, we define $\delta^{in}_F(B):=\{uv\in F: v\in B, u\in V\setminus B\}$, $d^{in}_D(B):=\sum_{a\in \delta^{in}_A(B)}w_a$, and the min rooted-cut capacity $\lambda_D:=\min\{d^{in}_D(B): \emptyset\neq B\subseteq V\setminus\{r\}\}$. A non-empty subset $B\subseteq V\setminus \{r\}$ is said to be a min rooted-cut if $d^{in}_D(B)=\lambda_D$. Min rooted-cut of a given capacitated digraph can be computed in polynomial time \cite{schrijver-comb-opt-book}. 

Let $D=(V, A, w: A\rightarrow \Z_{\ge 0})$ be a  \emph{capacitated digraph}. 
A subset $T\subseteq A$ is an \emph{arborescence} if the undirected version of the digraph $(V, T)$ is a spanning tree and for every vertex $u\in V$, there exists a unique path from $r$ to $u$ in the digraph $(V, T)$. 
A capacitated digraph $D=(V, A, w:A\rightarrow \Z_{\ge 0})$ \emph{packs} $k$ arborescences if there exist arborescences $T_1, T_2, \ldots, T_k\subseteq A$ such that each arc $a$ is present in at most $w_a$ arborescences. Such a collection of arborescences is called a packing of $k$ arborescences in $D$. 
We need the following result due to Edmonds. 

\begin{theorem}\cite{schrijver-comb-opt-book}\label{thm:arb-packing}
The maximum number of arborescences that can be packed in a capacitated digraph $D$ is equal to $\lambda_D$. 
    Given a digraph $D$ in which all capacities are either unit or infinity with finite $\lambda_D$, there exists a polynomial-time algorithm to compute a packing of $\lambda_D$ arborescences in $D$. 
\end{theorem}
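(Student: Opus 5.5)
We will prove the two parts of Theorem~\ref{thm:arb-packing} separately: first the min--max equality (Edmonds' branching theorem), via Lovász's constructive inductive argument, and then the algorithm, by making that argument effective. The easy inequality comes first. Every arborescence $T$ contains, for each $\emptyset\neq B\subseteq V\setminus\{r\}$, at least one arc of $\delta^{in}_A(B)$, because the $r$-to-$v$ path for any $v\in B$ must enter $B$. Hence in a packing $T_1,\ldots,T_k$, summing over the arborescences gives $k\le \sum_{a\in\delta^{in}_A(B)}w_a=d^{in}_D(B)$ for every such $B$, so $k\le\lambda_D$.

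For the reverse inequality, set $k:=\lambda_D$. We replace each arc $a$ by $w_a$ parallel unit-capacity copies, which changes neither $\lambda_D$ nor the notion of packing. We then induct on $k$; the case $k=0$ is trivial. The inductive step is to find one arborescence $T$ such that $d^{in}_{A\setminus T}(B)\ge k-1$ for every rooted cut $B$; the induction hypothesis then applied to $A\setminus T$ finishes the argument. We build $T$ by growing an $r$-rooted subarborescence $R$ with vertex set $V(R)$, starting from $R=\emptyset$ and $V(R)=\{r\}$. We maintain the invariant
\[
d^{in}_{A\setminus R}(B)\ge k-1 \quad\text{for every rooted cut } B \text{ with } B\not\subseteq V(R).
\]
For $B$ with $B\setminus V(R)\neq\emptyset$ we also need a version of the stronger count $d^{in}_{A\setminus R}(B)\ge k$ for sets disjoint from $V(R)$, since such sets are entered by no arc of $R$. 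A set is \emph{critical} if $B\setminus V(R)\neq\emptyset$ and the invariant is tight, i.e.\ $d^{in}_{A\setminus R}(B)=k-1$. We may add an arc $uv$ with $u\in V(R)$ and $v\notin V(R)$ to $R$ unless it enters some critical set.

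The main obstacle is showing that a safe arc always exists while $V(R)\neq V$. If there is no critical set, any arc leaving $V(R)$ works; such an arc exists because $d^{in}(V\setminus V(R))\ge k\ge 1$. Otherwise, take a critical set $X$ that is inclusion-maximal. Using submodularity of $d^{in}_{A\setminus R}$, I will show that critical sets whose parts outside $V(R)$ intersect have a critical union. Next, since $X\cap V(R)$ has only arcs of $R$ inside it, the count $d^{in}_{A}(X\setminus V(R))\ge k$ exceeds $d^{in}_{A\setminus R}(X)=k-1$ in the right way. Together these force an arc $uv$ with $u\in V(R)\setminus X$ and $v\in X\setminus V(R)$. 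By the maximality of $X$ and the uncrossing step, this arc enters no critical set. Once $V(R)=V$, the invariant gives exactly the required property of $T:=R$.

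For the algorithmic claim, the input has capacities in $\{1,\infty\}$ and $\lambda_D$ is finite. Then $\lambda_D$ is at most the number of unit arcs. We cap every infinite capacity at $\lambda_D$, which changes neither $\lambda_D$ nor the feasibility of any packing of $\lambda_D$ arborescences. The multigraph with parallel copies therefore has polynomial size. Each growth step needs to test whether a candidate arc $uv$ is safe. We do this by adding $uv$ to $R$ tentatively and checking, via polynomially many min-cut computations (for instance a min $r$-$v'$ cut in $A\setminus R$ for each vertex $v'\notin V(R)$), whether the invariant still holds. With $n$ growth steps per arborescence, polynomially many candidate arcs per step, and $\lambda_D$ arborescences, the total running time is polynomial.
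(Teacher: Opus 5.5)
\textbf{Gap in the safe-arc step.} Your proof fails exactly where you locate the main obstacle. The arc you produce, $uv$ with $u\in V(R)\setminus X$ and $v\in X\setminus V(R)$, enters $X$ itself. Since $X$ is critical, this arc is unsafe by your own definition, and no appeal to maximality of $X$ can change that.

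Your counting does not produce such an arc anyway. Every arc from $V\setminus X$ into $X\setminus V(R)$ is counted both in $d^{in}_A(X\setminus V(R))\ge k$ and in $d^{in}_{A\setminus R}(X)=k-1$. So the surplus can only come from an arc with tail in $X\cap V(R)$ and head in $X\setminus V(R)$, that is, an arc lying inside $X$.

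For that arc, union-closure plus maximality gives nothing. If it enters a critical $Z$, you only learn $X\cup Z=X$, i.e.\ $Z\subseteq X$, which is no contradiction. Also, your remark that $X\cap V(R)$ contains only arcs of $R$ is false, and it plays no role.

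\textbf{The correct Lovász step.} Use the opposite extremality and the intersection instead of the union. Take $X$ inclusion-minimal critical, and take the arc $uv$ with $u\in X\cap V(R)$, $v\in X\setminus V(R)$ given by the count above. Suppose $uv$ entered a critical $Z$. Submodularity gives $2(k-1)=d^{in}_{A\setminus R}(X)+d^{in}_{A\setminus R}(Z)\ge d^{in}_{A\setminus R}(X\cap Z)+d^{in}_{A\setminus R}(X\cup Z)$. Both right-hand terms are at least $k-1$, since both sets are rooted cuts containing $v\notin V(R)$. Hence $X\cap Z$ is tight; it contains $v$, so it is critical; and it misses $u$, so it is a proper subset of $X$. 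This contradicts the minimality of $X$.

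\textbf{Two smaller repairs.} First, your invariant is stated only for $B\not\subseteq V(R)$, which is vacuous once $V(R)=V$, so your final sentence does not follow. State the invariant for all rooted cuts. Your safety rule does maintain this: a new arc $uv$ enters only sets containing $v\notin V(R)$, and sets inside $V(R)$ are never entered again. Second, the algorithmic safety test must also cover the sets containing $v$ that become contained in the enlarged $V(R)$. One min $r$-$v$ cut computation in $A\setminus(R+uv)$ suffices.

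Your capping of infinite capacities at $\lambda_D$ is fine. It matches what the paper actually argues: the paper replaces each infinite arc by $1+\lambda_D$ unit arcs and cites both Edmonds' theorem and the unit-capacity packing algorithm from Schrijver rather than reproving them.
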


We briefly address the second statement in the above theorem, which is derived from the fact that given a digraph $D$ with unit-capacities, there exists a polynomial-time algorithm to compute a packing of $\lambda_D$ arborescences in $D$ (see \cite{schrijver-comb-opt-book}). To derive the second statement, we proceed as follows: 
If all capacities are unit or infinity, but $\lambda_D$ is finite, then we may replace each  infinite-capacity arc with $1+\lambda_D$ parallel unit-capacity arcs. We note that $\lambda_D$ is at most the number of arcs in $D$ and hence, this transformation gives a unit-capacity graph $D'$ whose size is at most polynomial in the number of arcs in $D$. Moreover, $\lambda_{D'}=\lambda_D$ and a packing of $\lambda_{D'}$ arborescences in $D'$ is also a packing of $\lambda_D$ arborescences in $D$. Thus, we infer the result using the polynomial-time algorithm for computing arborescence packing in unit-capacity digraphs. 

We need the result that the minimum cost rooted-connectivity augmentation problem can be solved in polynomial time. We define the problem and state the result. 

\begin{center}
\fbox{%
\begin{minipage}{0.95\textwidth}
\textbf{\rootedCAug:}

\textbf{Given:} Capacitated digraph $D=(V, A, w: A\rightarrow \Z_{\ge 0})$, root $r\in V$, set $A'$ of arcs on vertex set $V$ with non-negative costs $c': A'\rightarrow \R_{\ge 0}$ and unit-capacities. 

\textbf{Goal:} Find a subset $F'\subseteq A'$ with minimum $\sum_{a\in F'}c'_a$ such that $\lambda_{D+F'}\ge \lambda_D+1$. 

\end{minipage}}
\end{center}

\begin{theorem}\label{thm:rootedCAug-is-poly-time}
    There is a deterministic polynomial-time algorithm to solve \rootedCAug. 
\end{theorem}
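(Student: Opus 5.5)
We prove Theorem~\ref{thm:rootedCAug-is-poly-time} by reducing \rootedCAug to a minimum-cost covering problem for a crossing family of sets, which Edmonds (and Frank) showed is solvable in polynomial time through submodular flows. Let $k:=\lambda_D$. Because capacities are non-negative integers and added arcs have unit capacity, $\lambda_{D+F'}\ge k+1$ holds exactly when every non-empty $B\subseteq V\setminus\{r\}$ with $d^{in}_D(B)=k$ has $|\delta^{in}_{F'}(B)|\ge 1$. Cuts with $d^{in}_D(B)\ge k+1$ cannot become deficient once arcs are added. So the task is to cover the family $\mathcal{M}$ of min rooted-cuts of $D$ by arcs entering them.

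\textbf{Step 1: $\mathcal{M}$ is an intersecting family.} The function $B\mapsto d^{in}_D(B)$ is submodular. Take $B_1,B_2\in\mathcal{M}$ with $B_1\cap B_2\neq\emptyset$. Both $B_1\cap B_2$ and $B_1\cup B_2$ are non-empty and avoid $r$, so each has in-capacity at least $k$. Submodularity gives $d^{in}(B_1\cap B_2)+d^{in}(B_1\cup B_2)\le 2k$, so both are again min rooted-cuts. Define $h(B)=1$ for $B\in\mathcal{M}$ and $h(B)=0$ otherwise. Then $h$ is intersecting supermodular on subsets of $V\setminus\{r\}$. We need $F'\subseteq A'$ with $|\delta^{in}_{F'}(B)|\ge h(B)$ for all $B$, at minimum cost.

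\textbf{Step 2: solve the covering problem.} The LP $\min\{c'^\top x: x(\delta^{in}_{A'}(B))\ge h(B)\ \forall B,\ 0\le x\le 1\}$ for an intersecting (indeed crossing) supermodular $h$ is totally dual integral. Its constraint system is a submodular-flow / Edmonds--Giles system, so the LP has an integral optimum $x\in\{0,1\}^{A'}$. This optimum is exactly a minimum-cost cover. I would cite Frank's result on covering intersecting supermodular functions by directed arcs (e.g.\ \cite{schrijver-comb-opt-book}, the chapters on submodular flows and rooted connectivity augmentation).

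\textbf{Step 3: algorithmic access.} The algorithms for submodular flow need only an oracle for minimizing $d^{in}_{F'}(B)-h(B)$-type functions over the relevant sets. Equivalently, we must test whether a candidate $F'$ covers all of $\mathcal{M}$. That test is one min rooted-cut computation in $D+F'$, checking whether $\lambda_{D+F'}\ge k+1$, and it runs in polynomial time. As an alternative concrete route, use $\lambda_{D+F'}\ge k+1$ iff for every $v\neq r$ the max $r$–$v$ flow in $D$ plus $F'$ is at least $k+1$. This matches Frank's formulation of minimum-cost rooted $(k+1)$-connectivity augmentation, where the existing arcs get cost $0$ with their capacities and the new arcs get cost $c'$, solved as a weighted matroid-intersection / submodular-flow problem.

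The main obstacle is Step 2: formally justifying the TDI property and polynomial-time solvability. I would not reprove it but invoke the standard theorem that minimum-cost covering of a crossing supermodular set function by a digraph's arcs is a submodular flow problem. Step 1 is what makes that theorem applicable.
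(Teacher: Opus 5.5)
Your main route differs from the paper's. The paper never looks at the structure of the min rooted-cuts. It uses Edmonds' theorem to rewrite $\lambda_{D+F'}\ge\lambda_D+1$ as ``$D+F'$ packs $\lambda_D+1$ arborescences''. It gives the arcs of $D$ cost $0$ (keeping their capacities) and the arcs of $A'$ cost $c'$ and capacity $1$. It then calls the known polynomial-time algorithm for a minimum-cost subgraph packing a prescribed number of arborescences (Schrijver, Thm.~53.6). Your ``alternative concrete route'' at the end of Step 3 is essentially this.

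Your primary argument instead treats the problem as covering the family $\mathcal{M}$ of min rooted-cuts. It then appeals to Frank's theorem on covering intersecting families (kernel systems) or intersecting supermodular functions by arcs, via submodular flows. That is a legitimate alternative, and it gives more: integrality of the natural cut-covering LP, and applicability to any intersecting demand family. The price is heavier machinery and the need to invoke that theorem in exactly the right form, which is where your write-up slips.

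Specifically, the claim in Step 1 that $h$ is intersecting supermodular is false, where $h$ is the $0/1$ indicator of $\mathcal{M}$ on all subsets. If $X\in\mathcal{M}$ and $Y\notin\mathcal{M}$ intersect, submodularity only yields $d^{in}_D(X\cap Y)+d^{in}_D(X\cup Y)\le k+d^{in}_D(Y)$, which does not put either set in $\mathcal{M}$. For instance, take $V=\{r,a,b,c\}$ with arcs $ra,rb$ of capacity $1$, $rc$ of capacity $3$, $ab,ba$ of capacity $2$, and no others. The unique min rooted-cut is $\{a,b\}$ (value $2$; every other set has in-capacity at least $3$). With $Y=\{b,c\}$ you get $h(\{a,b\})+h(Y)=1>0=h(\{b\})+h(\{a,b,c\})$.

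You say Step 1 is what makes the covering theorem applicable, so this has to be repaired, but the repair is easy. Use the deficiency function $p(X):=\lambda_D+1-d^{in}_D(X)$ on nonempty $X\subseteq V\setminus\{r\}$. It is intersecting supermodular directly from submodularity of $d^{in}_D$, by your own Step 1 computation. Since $p\le 1$ with equality exactly on $\mathcal{M}$, the constraints with $p(X)\le 0$ are implied by $x\ge 0$, so the polyhedron is the same as yours. Equivalently, regard $h\equiv 1$ as a function on the intersecting family $\mathcal{M}$ only, which is the form of Frank's kernel-system theorem.

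Two smaller corrections. First, drop ``indeed crossing''. Covering a crossing supermodular function by a minimum-cost subset of prescribed arcs is NP-hard in general: $p\equiv 1$ on all nonempty proper subsets is the min-cost strongly connected spanning subgraph problem. It is harmless here only because every set avoids $r$, so crossing and intersecting coincide. Second, in Step 3 what is needed is not a test of integral covers but an evaluation oracle for $p$. Alternatively, separation for the LP is a min rooted-cut computation in $D$ together with $A'$ carrying capacities $x$. Both are polynomial, so your conclusion stands once these fixes are made.
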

\begin{proof}
Feasibility verification can be done in polynomial time since min rooted-cut can be computed in polynomial time. By Theorem \ref{thm:arb-packing}, $\lambda_{D+F'}\ge \lambda_D+1$ if and only if $D+F'$ can pack $\lambda_D+1$ arborescences. Consider the digraph $D''=(V, A\cup A', w'': A\cup A'\rightarrow \Z_{\ge 0})$ where the capacity function $w''$ is given by $w''(a)=w(a)$ if $a\in A$ and $w''(a)=1$ if $a\in A'$. Assign cost function $c'': A\cup A'\rightarrow \R_{\ge 0}$ by setting $c''(a)=0$ if $a\in A$ and $c''(a)=c'(a)$ if $a\in A'$. Then, the problem is equivalent to finding a min-cost subgraph of $D''$ that can pack $k$ arborescences. The latter problem is indeed solvable in polynomial time (see Theorem 53.6 in Schrijver's book \cite{schrijver-comb-opt-book}). 
\end{proof}

\section{Representing deficient-cuts}\label{sec:reps}
We prove Theorem \ref{thm:representatives} in this section. 
We recall that the input is a graph $H=(V, E=S\uplus U)$ with $S$ being the safe edges and $U$ being the unsafe edges, and non-negative integers $p$ and $q$ such that $H$ is \pqminusoneFlexC. 
For convenience, we use $k=p+q-1$ throughout this section. If $H$ is \pqFlexC, then $\cT=\emptyset$ and we can simply return $\cC=\emptyset$. For the rest of the section, we assume that $H$ is not \pqFlexC but is only \pqminusoneFlexC. 

We will work with capacitated digraphs $D=(V, A, w: A\rightarrow \Z_{\ge 0})$ in which each arc is classified as either safe or unsafe. Our algorithm will recursively increase the capacity of certain arcs to infinity and we term this \emph{promoting}: promoting an arc means that its capacity is being increased to $\infty$. We emphasize a subtle but important difference: in contrast to undirected graphs where inflating the capacity of an edge to infinity amounts to contracting that edge for the purposes of min-cut/connectivity, in directed graphs, setting an arc capacity to infinity does not amount to arc contraction for the purposes of min rooted-cut. 

Let $D_0$ be the capacitated digraph obtained from the undirected graph $H=(V, E=S\uplus U)$ by bidirecting all of its edges and setting the capacity of all arcs to be unit. We carry over the safe/unsafe classification from $H$ to $D_0$: both orientations of each safe edge are safe and both orientations of each unsafe edge are unsafe. Recall that we have a fixed root vertex $r\in V$. Our algorithm is Construct-Representatives$(D_0)$ given below. 
Let $\cC_0$ denote the collection returned by Construct-Representatives$(D_0)$. 
All digraphs constructed by this recursive algorithm have the same vertex set, root, arcs, and classification of safe and unsafe arcs as that of $D_0$, but they may differ from $D_0$ only in arc capacities. 

\medskip
\noindent\fbox{
\begin{minipage}{0.95\linewidth}
\textbf{Construct-Representatives$(D)$}
\begin{enumerate}\setlength{\itemsep}{3pt}
 \item Initialize $\cC_D\gets\emptyset$.
 \item Find a packing $T_1,\ldots,T_{\lambda_D}$ of arborescences in $D$ using Theorem~\ref{thm:arb-packing}.
 \item For each $I\subseteq[\lambda_D]$ with $|I|=p-1$:
 \renewcommand{\labelenumii}{(\roman{enumii})}
\begin{enumerate}\setlength{\itemsep}{3pt}
  \item Create $D_I$ from $D$ by promoting every safe arc in
  $\bigcup_{i\in[\lambda_D]\setminus I}T_i$.
  \item Compute $\lambda_{D_I}$. 
  \item If $\lambda_{D_I}=\lambda_D$, add $D_I$ to $\cC_D$.
  \item If $\lambda_D<\lambda_{D_I}\le k$, add the collection returned by Construct-Representatives$(D_I)$ to $\cC_D$.
  \item If $\lambda_{D_I}>k$, discard $D_I$.
 \end{enumerate}
 \item Return $\cC_D$.
\end{enumerate}
\end{minipage}
}
\medskip

To analyze the algorithm, it is convenient to associate a recursion tree with the execution of the algorithm: The root node of the recursion tree is $D_0$. Suppose the algorithm is called on input $D$. For each $D_I$ created during the execution of Step 3 of the algorithm, we create a node labeled by $D_I$ as a child of $D$ and say that $D$ is the parent of $D_I$. A node with children is an \emph{internal-node} and every other node is a \emph{leaf}. 
A leaf node $D_I$ discarded in Step 3(v) is a \emph{discarded-leaf} and 
every other leaf is a \emph{representative-leaf}. We note that $\cC_0$ is exactly the set of representative-leaves. 

We show that Construct-Representatives$(D_0)$ terminates, and bound the number of representative-leaves and its run-time in Section \ref{sec:termination-and-number}. Next, we show the correspondence between deficient-cut and min rooted-cuts in digraphs in $\cC_0$ in Section \ref{sec:def-and-min-rooted-cuts}.
Lemma \ref{lem:size-bound} in Section \ref{sec:termination-and-number} and Lemmas \ref{lem:survivors-are-deficient} and \ref{lem:deficient-cuts-survive} in Section \ref{sec:def-and-min-rooted-cuts} together complete the proof of Theorem \ref{thm:representatives}.

\subsection{Termination and number of returned digraphs}\label{sec:termination-and-number}
In this section, we show that the algorithm terminates and bound $|\cC_0|\le L(p, q)$. 

\begin{proposition}\label{prop:easy-obs}
    Let $D$ be a capacitated digraph that occurs in the recursion tree. The following hold: 
    \begin{enumerate}[label={(\arabic*)}]
        \item Every arc in $D$ has capacity either $1$ or $\infty$. 
        \item $\lambda_D\ge p$. 
        \item If $D$ is an internal-node or a representative-leaf, then $\lambda_D\le k$. 
        \item If $D'$ is a child of $D$ and $D'$ is an internal-node, then $\lambda_{D'}\ge\lambda_D + 1$. 
    \end{enumerate}
\end{proposition}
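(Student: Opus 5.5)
All four items follow by induction along the recursion tree, from the root $D_0$ down to the children created in Step 3. Each node is either $D_0$ or a $D_I$ obtained from its parent $D$ by promoting some arcs. A node is an internal-node exactly when Step 3(iv) fires for it, i.e.\ when Construct-Representatives is called on it.

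For (1), all capacities in $D_0$ are unit. Promoting only changes capacities to $\infty$, so every node has capacities in $\{1,\infty\}$. This also justifies invoking Theorem~\ref{thm:arb-packing} at every call, once we check that $\lambda_D$ is finite there. By (3), it is at most $k$ at internal nodes, and internal nodes are the only nodes on which the algorithm is called.

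For (2), first handle the root. $H$ is \pqminusoneFlexC, so every cut $B$ has $|\delta_S(B)|\ge p$ or $|\delta_E(B)|\ge p+q-1\ge p$. We may assume $q\ge 1$ here; if $q=0$ the problem is trivial, and the same bound $|\delta_E(B)|\ge p$ holds anyway. Since $D_0$ is the bidirected unit-capacity digraph, $d^{in}_{D_0}(B)=|\delta_E(B)|\ge p$, so $\lambda_{D_0}\ge p$. For the induction step, promoting arcs only increases capacities, so $d^{in}_{D_I}(B)\ge d^{in}_D(B)$ for every $B$. Hence $\lambda_{D_I}\ge\lambda_D\ge p$. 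The same monotonicity gives $\lambda_{D_I}\ge \lambda_D$ in general, so the three cases of Step 3 exhaust all possibilities.

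For (3), a representative-leaf other than $D_0$ is added in Step 3(iii), so $\lambda_{D_I}=\lambda_{D}$ for its parent $D$. That parent is internal, and by induction $\lambda_D\le k$. An internal node $D_I\neq D_0$ is recursed on only under the condition $\lambda_{D_I}\le k$ of Step 3(iv). The root $D_0$ is the one step needing care. We assumed $H$ is not \pqFlexC, so some set $B$ has $|\delta_S(B)|\le p-1$ and $|\delta_E(B)|\le p+q-1=k$. Choosing the side avoiding $r$ (crossing edges are symmetric), we get $\lambda_{D_0}\le k$. Also $D_0$ is never a leaf, because Step 3 always creates children: $\lambda_{D_0}\ge p>p-1$ guarantees that subsets $I$ of size $p-1$ exist.

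For (4), a child $D'$ that is an internal-node was recursed on in Step 3(iv), which requires $\lambda_{D'}>\lambda_D$. Integrality of capacities gives $\lambda_{D'}\ge\lambda_D+1$; here $\lambda_{D'}\le k$ is finite, so integrality makes sense. The only potential obstacle is the root case of (3), which relies on the standing assumption that $H$ is not \pqFlexC. The rest is routine monotonicity.
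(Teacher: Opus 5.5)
Your proof is correct and follows essentially the same route as the paper: capacities only go from $1$ to $\infty$, so cut capacities and $\lambda$ are monotone down the tree; $(p,q-1)$-Flex-Connectivity gives $\lambda_{D_0}\ge p$; and (3)--(4) come from the case analysis of Step 3 plus integrality. The bound $\lambda_{D_0}\ge p$ genuinely needs $q\ge 1$, and your aside that it ``holds anyway'' for $q=0$ is not right, since $(p,-1)$-flex-connectivity only forces $p-1$ crossing edges, but that case is never used. Two points go beyond the paper. For (3) the paper just says ``by construction'', whereas you correctly note that the root $D_0$ also needs the standing assumption that $H$ is not $(p,q)$-Flex-Connected. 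You also make explicit the monotonicity $\lambda_{D_I}\ge\lambda_D$ that the paper's argument for (4) uses silently.
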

\begin{proof}
\begin{enumerate}[label={(\arabic*)}]
    \item Since every $D$ that occurs in the recursion tree is obtained by promoting some arcs of $D_0$ and all arcs of $D_0$ have capacity one, we have that every arc in $D$ has capacity either $1$ or $\infty$. 
    \item Since $H=(V, E)$ is \pqminusoneFlexC, every non-empty proper subset $B\subsetneq V$ has at least $p$ edges. Consequently, $d^{in}_{D_0}(B)=|\delta_E(B)|\ge p$ for every $B\subseteq V\setminus\{r\}$. Thus, $\lambda_{D_0}\ge p$. Since every $D$ that occurs in the recursion tree is obtained by promoting some arcs of $D_0$, we have that $\lambda_D\ge \lambda_{D_0}\ge p$. 
    \item This is by construction of internal-nodes and representative-leaves. 
    \item Suppose $\lambda_{D'}=\lambda_D$. Then, by Step 3(iii) of the algorithm, $D'$ has to be a representative-leaf and cannot be an internal-node, a contradiction. 
\end{enumerate}
\end{proof}

The next lemma implies that the recursive algorithm terminates. 
\begin{lemma}\label{lem:termination}
Every root to leaf path in the recursion tree has at most $q$ internal-nodes. 
\end{lemma}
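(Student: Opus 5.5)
We use Proposition~\ref{prop:easy-obs} to show that $\lambda$ strictly increases along internal nodes, starting from at least $p$ and never exceeding $k=p+q-1$.

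Consider a root-to-leaf path $D_0=N_0, N_1, \ldots, N_m$ in the recursion tree, where $N_{i+1}$ is a child of $N_i$. The internal-nodes on this path are exactly $N_0, \ldots, N_{m-1}$, since every non-last node has a child. If $m=0$ the path has no internal-nodes (or only the root, if the root has children), so assume $m\ge 1$.

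Each $N_i$ with $1\le i\le m-1$ is an internal-node and a child of $N_{i-1}$. Part (4) of Proposition~\ref{prop:easy-obs} therefore gives $\lambda_{N_i}\ge \lambda_{N_{i-1}}+1$. Telescoping yields
\[
\lambda_{N_{m-1}}\ge \lambda_{N_0}+(m-1).
\]
Part (2) gives $\lambda_{N_0}\ge p$. Since $N_{m-1}$ is an internal-node, part (3) gives $\lambda_{N_{m-1}}\le k=p+q-1$. Combining these,
\[
p+(m-1)\le p+q-1,
\]
so $m\le q$. Hence the path contains $m\le q$ internal-nodes.

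For termination, note that every internal node has finitely many children: at most $\binom{\lambda_D}{p-1}$ with $\lambda_D\le k$. Together with the bounded depth just shown, the recursion tree is finite.

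The only delicate point is confirming that Proposition~\ref{prop:easy-obs}(4) applies to every consecutive pair of internal nodes, including the first step from $N_0$ to $N_1$. It does, because $N_1$ is itself internal whenever $m\ge 2$. The last node $N_m$ (a leaf) imposes no constraint, so no further case analysis is needed.
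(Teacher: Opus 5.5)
Your proof is correct and matches the paper's: telescope Proposition~\ref{prop:easy-obs}(4) along the internal nodes of the path, then bound using $\lambda_{D_0}\ge p$ from part (2) below and $\lambda\le k$ at the last internal node from part (3) above. Your count ($m$ internal nodes give only $m-1$ increments, so $m\le q$) is more careful than the paper's, whose inequality $\lambda_{D'}\ge\lambda_{D_0}+t$ and conclusion $t\le q-1$ are only right if $t$ excludes the root.
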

\begin{proof}
    Let $D_1$ be a leaf in the recursion tree with $D'$ being its parent. Suppose the root to leaf path has $t$ internal-nodes. Then, by Proposition \ref{prop:easy-obs}(4), $\lambda_{D'}\ge \lambda_{D_0}+t$. By Proposition \ref{prop:easy-obs}(3), $\lambda_{D'}\le k$. Thus, $t\le k-\lambda_{D_0}$. By Proposition \ref{prop:easy-obs}(3), we have that $\lambda_{D_0}\ge p$ and hence, $t\le k-p=q-1$. 
\end{proof}

Lemma \ref{lem:termination} implies that the recursive algorithm terminates since the branching factor at each internal-node is finite. We next bound the runtime and size of the family $\cC_0$ returned by the algorithm. 

\begin{lemma}\label{lem:size-bound}
The algorithm can be implemented to run in time $qL(p, q)$ times polynomial in the input size and moreover, 
\[
|\cC_0|\le L(p, q)
\]
\end{lemma}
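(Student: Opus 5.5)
We bound the number of leaves in the recursion tree. The approach is to charge each node $D$ to its value $\lambda_D$ and count children level by level. An internal-node $D$ with $\lambda_D=\ell$ has exactly $\binom{\ell}{p-1}$ children, one per $(p-1)$-subset $I\subseteq[\ell]$. By Proposition~\ref{prop:easy-obs}(2),(3), every internal-node $D$ satisfies $p\le \lambda_D\le k=p+q-1$. By Proposition~\ref{prop:easy-obs}(4), the $\lambda$-values strictly increase along any root-to-leaf path of internal nodes.

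We prove by induction, downward on $\ell=\lambda_D$, the following claim: the number of representative-leaves in the subtree rooted at a node $D$ with $\lambda_D=\ell\in[p,k]$ is at most
\[
N(\ell):=\prod_{j=\ell}^{k}\binom{j}{p-1}.
\]
Fix such a node $D$; its children number $\binom{\ell}{p-1}$, and each child $D_I$ falls into one of three cases.
\begin{itemize}
\item If $D_I$ is a representative-leaf with $\lambda_{D_I}=\ell$, it contributes $1$.
\item If $D_I$ is discarded, it contributes $0$.
\item If $D_I$ is internal, then $\lambda_{D_I}\ge \ell+1$, and by induction it contributes at most $N(\lambda_{D_I})\le N(\ell+1)$.
\end{itemize}
The last inequality uses that $N$ is non-increasing in its argument, since each factor is $\ge 1$ for $j\ge p-1$. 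We adopt the convention $N(k+1)=1$, and each contribution is at most $N(\ell+1)\ge 1$. Hence the subtree contributes at most $\binom{\ell}{p-1}N(\ell+1)=N(\ell)$. The base case $\ell=k$ holds because every child either has $\lambda=k$ and is a representative-leaf, or has $\lambda>k$ and is discarded. Applying the claim at the root $D_0$, where $\lambda_{D_0}\ge p$, gives
\[
|\cC_0|\le N(\lambda_{D_0})\le N(p)=\prod_{j=p}^{p+q-1}\binom{j}{p-1}=L(p,q).
\]

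For the runtime, the same induction bounds the total number of nodes in the recursion tree, discarded leaves included. The subtree of a node with $\lambda=\ell$ has at most $M(\ell)$ nodes, where $M(\ell)=1+\binom{\ell}{p-1}M(\ell+1)$ and $M(k+1)=1$. This recursion gives $M(p)\le \sum_{t=0}^{q}\prod_{j=p}^{p+t-1}\binom{j}{p-1}\le (q+1)L(p,q)$, so the tree has $O(qL(p,q))$ nodes.

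At each node, the work is polynomial:
\begin{itemize}
\item All capacities are in $\{1,\infty\}$ by Proposition~\ref{prop:easy-obs}(1), and $\lambda_D\le k$ is finite, so Theorem~\ref{thm:arb-packing} computes the arborescence packing in polynomial time.
\item Creating each child $D_I$ and computing $\lambda_{D_I}$ by a min rooted-cut computation is polynomial per child.
\item Enumerating the subsets $I$ costs time linear in the number of children, which is already counted in the node bound.
\end{itemize}
When $\lambda_{D_I}=\infty$, the min-cut routine detects it, and $D_I$ is simply discarded. Multiplying the node count by the per-node cost gives total time $qL(p,q)\cdot\mathrm{poly}$.

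The main obstacle is getting the product structure right when $\lambda$ jumps by more than one at a step. Such jumps cause no loss, because $N$ is monotone and each child contributes at most $N(\ell+1)$. The only other subtlety is that each child's contribution must be at least $1$ in the bound, which is guaranteed because $N(\ell+1)\ge1$.
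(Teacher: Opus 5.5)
Your proof is correct and follows essentially the same route as the paper: an induction on $\lambda_D$ that yields the product bound $\prod_{j=\lambda_D}^{k}\binom{j}{p-1}$, using monotonicity to handle children whose $\lambda$ jumps by more than one, evaluated at the root where $\lambda_{D_0}\ge p$. The differences are cosmetic. You count representative-leaves rather than all leaves, and you bound the total node count directly via the recurrence $M(\ell)$ instead of the paper's depth-times-leaves argument, which also makes explicit that the per-child work is charged to the node count.
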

\begin{proof}
    The computation at each internal-node $D$ is polynomial time since $\lambda_D$ and a packing of arborescences in $D$ can be computed in polynomial time (by Proposition \ref{prop:easy-obs}(1) and Theorem \ref{thm:arb-packing}). Lemma \ref{lem:termination} implies that the recursive algorithm terminates since the branching factor at each internal-node is finite. 
    In order to bound the run-time and the size of $\cC_0$, it suffices to bound the number of leaves in the recursion tree. 

    We show by induction on $k-\lambda_D$ that the number of leaves in the sub-tree rooted at $D$ is at most $\prod_{\ell=\lambda_D}^{k}\binom{\ell}{p-1}$. The base case is trivial. We show the induction step. Firstly, $D$ has $\binom{\lambda_D}{p-1}$ children. Each child $D_I$ of $D$ is either a leaf or an internal-node. If $D_I$ is an internal-node, then it has $\lambda_{D_I}\ge\lambda_D+1$ by Proposition \ref{prop:easy-obs}(4). Hence, by induction, the sub-tree rooted at $D_I$ has at most $\prod_{\ell=\lambda_{D_I}}^{k}\binom{\ell}{p-1}\le \prod_{\ell=\lambda_D+1}^{k}\binom{\ell}{p-1}$ leaves. Thus, the total number of leaves in the subtree rooted at $D$ is at most $\prod_{\ell=\lambda_D}^{k}\binom{\ell}{p-1}$. 

    The above statement implies that $|\cC_0|$ is at most the number of leaves in the sub-tree rooted at $D_0$ which is at most  $\prod_{\ell=\lambda_{D_0}}^{k}\binom{\ell}{p-1}\le L(p, q)$ since $\lambda_{D_0}\ge p$ by Proposition \ref{prop:easy-obs}(2). 

    Lemma \ref{lem:termination} implies that the depth of the recursion tree is at most $q$ and consequently, the total number of nodes in the recursion tree is at most $qL(p, q)$. This implies the stated run-time bound. 
\end{proof}

\subsection{Deficient-cuts and min rooted-cuts}\label{sec:def-and-min-rooted-cuts}
In this section, we show that a set $B\subseteq V\setminus\{r\}$ is a deficient-cut if and only if $B$ is a min rooted-cut of some capacitated digraph $D\in \cC_0$. 
We first show that every min rooted-cut of every representative-leaf is a deficient-cut. 
\begin{lemma}\label{lem:survivors-are-deficient}
    If $D\in \cC_0$, then $\lambda_{D}=k$ and moreover, every min rooted-cut of $D$ is in $\cT$. 
\end{lemma}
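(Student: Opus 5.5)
The plan is to look at any min rooted-cut of a representative-leaf $D$ and use the $(p,q-1)$-flexibility of $H$ to force $|\delta_E(B)|=k$. That gives both claims at once: $\lambda_D=k$ and $B\in\cT$.

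\emph{Setup.} Let $D\in\cC_0$. The root $D_0$ always has at least one child, so $D$ is not the root. Hence $D=D_I$ for some parent $D'$ and some $I\subseteq[\lambda_{D'}]$ with $|I|=p-1$. By Step 3(iii), $\lambda_D=\lambda_{D'}$, and by Proposition~\ref{prop:easy-obs}(3), $\lambda_{D'}\le k$, so $\lambda_D$ is finite. Fix any min rooted-cut $B$ of $D$; one exists because $\lambda_D$ is finite. Since $d^{in}_D(B)<\infty$, no arc promoted in $D$ enters $B$. By Proposition~\ref{prop:easy-obs}(1), every arc entering $B$ then has capacity $1$ in $D$, and hence also in $D'$ and in $D_0$, since promotion only increases capacities. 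Consequently
\[
d^{in}_{D'}(B)=d^{in}_D(B)=\lambda_D=\lambda_{D'}.
\]
So $B$ is also a min rooted-cut of $D'$. Moreover, $d^{in}_{D_0}(B)=d^{in}_D(B)$, which equals $|\delta_E(B)|$ because bidirecting puts exactly one arc into $B$ per crossing edge.

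\emph{Tightness of the packing.} Next I use the packing $T_1,\ldots,T_{\lambda_{D'}}$ computed at $D'$. Each arborescence $T_i$ contains at least one arc entering $B$, since it has a path from $r\notin B$ to every vertex of $B$. The arcs entering $B$ all have unit capacity and total capacity $\lambda_{D'}$. A counting argument then shows that each $T_i$ uses exactly one arc entering $B$, the $T_i$'s use pairwise distinct such arcs, and every arc entering $B$ is used by exactly one $T_i$.

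\emph{Bounding safe edges.} For $i\notin I$, the arc of $T_i$ entering $B$ must be unsafe. If it were safe, it would have been promoted in $D_I=D$, making $d^{in}_D(B)=\infty$. Hence the safe arcs entering $B$ are among the $|I|=p-1$ arcs contributed by $T_i$ with $i\in I$. Safe arcs entering $B$ correspond bijectively to safe edges of $H$ crossing $B$, so
\[
|\delta_S(B)|\le p-1.
\]

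\emph{Conclusion.} Since $H$ is $(p,q-1)$-Flex-Connected and $|\delta_S(B)|\le p-1$, we must have $|\delta_E(B)|\ge p+q-1=k$. Combined with $|\delta_E(B)|=\lambda_D\le k$, this gives $\lambda_D=k$ and $|\delta_E(B)|=k$. Together with $|\delta_S(B)|\le p-1$, it follows that $B\in\cT$.

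The main obstacle is the tightness step. It needs care that all arcs entering $B$ are unit-capacity in $D'$, which follows from finiteness in $D$ since $D$ dominates $D'$ capacity-wise. It also needs that the arborescences partition these arcs, so that every safe entering arc is attributed to some $T_i$. Without that exhaustive covering, some safe arc entering $B$ might lie in no $T_i$, escape promotion, and break the bound $|\delta_S(B)|\le p-1$.
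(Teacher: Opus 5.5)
Your proposal is correct and follows the paper's proof essentially step for step. You pass to the parent $D'$ and use finiteness of $d^{in}_D(B)$ to rule out promoted entering arcs. Tightness of the arborescence packing at $D'$ then confines the safe entering arcs to the $p-1$ arborescences indexed by $I$, and $(p,q-1)$-flex-connectivity forces $|\delta_E(B)|=\lambda_D=k$, hence $B\in\cT$.
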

\begin{proof}
    Since $D\in \cC_0$, it has a parent $D'$ in the recursion tree with $\lambda_{D}=\lambda_{D'}\le k$. Let $B$ be a min rooted-cut of $D$. Recall that $H=(V, E=S\uplus U)$ where edges in $S$ are safe and edges in $U$ are unsafe and we need to show that $B\in \cT$, i.e., $|\delta_{S}(B)|\le p-1$ and $|\delta_{E}(B)|=p+q-1$. 
    
    Since $B$ is a min rooted-cut in $D$, we have that $\lambda_{D}=d^{in}_D(B)\le k$. Thus, there are no promoted arcs entering $B$ in $D$ as well as $D'$. 

    We first show that $|\delta_{S}(B)|\le p-1$. 
    Since $D$ is a child of $D'$, there exists a packing $T_1, T_2, \ldots, T_{\lambda_{D'}}$ of arborescences in $D'$, a subset $I\subseteq[\lambda_{D'}]$ with $|I|=p-1$ such that $D=D_I$. 
    We need the following claim. 

    \begin{claim}\label{claim:arborescence}
    We have the following: 
    \begin{enumerate}[label={(\arabic*)}]
        \item $|\delta^{in}_{T_i}(B)|=1\ \forall\ i\in [\lambda_{D'}]$, 
        \item $d^{in}_{D'}(B)=\lambda_{D'}$, 
        \item every arc entering $B$ in $D'$ is present in exactly one of the arborescences $T_1, T_2, \ldots, T_{\lambda_{D'}}$. 
    \end{enumerate}
        
    \end{claim}
    \begin{proof}
    We have that $|\delta^{in}_{T_i}(B)|\ge 1$ for every $i\in [\lambda_{D'}]$ by definition of arborescences. Hence, 
    \begin{align*}
        \lambda_{D'}
        \le \sum_{i=1}^{\lambda_{D'}}|\delta_{T_i}^{in}(B)|
        \le d^{in}_{D'}(B)
        =d^{in}_D(B)
        =\lambda_D
        =\lambda_{D'}, 
    \end{align*}
    where the second inequality is because all arcs in $\delta^{in}_{D'}(B)$ have capacity one since no promoted arcs enter $B$ in $D'$ and consequently, the family $\left\{\delta^{in}_{T_i}(B)\right\}_{i=1}^{\lambda_{D'}}$ is pairwise disjoint. The first equality is because there are no promoted arcs entering $B$ in $D'$ and $D$. Thus, the above sequence of inequalities is tight. Moreover, 
    $|\delta_{T_i}^{in}(B)|=1$ for each $i\in [\lambda_{D'}]$ and  
    $d^{in}_{D'}(B)=\lambda_{D'}$ and thus, $B$  is a min rooted-cut in $D'$ as well. 
    Finally, every arc entering $B$ in $D'$ has to be present in exactly one of the arborescences $T_1, T_2, \ldots, T_{\lambda_{D'}}$: if there exists an arc not present in any of them, then $d^{in}_{D'}(B)>\sum_{i=1}^{\lambda_{D'}}|\delta_{T_i}^{in}(B)|$, a contradiction; moreover, an arc cannot be present in multiple arborescences, since every arc entering $B$ in $D'$ is not promoted and hence, has capacity one. 
    \end{proof}

    We recall that $|\delta_{S}(B)|$ is at most the number of safe arcs entering $B$ in $D'$. 
    By Claim \ref{claim:arborescence}(3), every arc entering $B$ in $D'$ is present in exactly one of the arborescences $T_1, T_2, \ldots, T_{\lambda_{D'}}$. 
    Since no promoted arcs enter $B$ in $D'$, it follows that there are no safe arcs entering $B$ in $\cup_{i\in [\lambda_D]\setminus I}T_i$. Consequently, all safe arcs entering $B$ in $D'$ are in $\cup_{i\in I}T_i$. 
    Claim \ref{claim:arborescence}(1) implies that the number of safe arcs entering $B$ in $\cup_{i\in I}T_i$ is at most $|I|=p-1$. 
    Hence, $|\delta_{S}(B)|\le p-1$. 

    Next, we show that $|\delta_E(B)|=p+q-1=k$. We have already shown that $|\delta_{S}(B)\le p-1$. Since $H=(V, E=S\uplus U)$ is \pqminusoneFlexC, we have that $k\le |\delta_{E}(B)|=d^{in}_{D_0}(B)\le d^{in}_{D}(B)=\lambda_{D}\le k$ and hence, $|\delta_{E}(B)|=k$. This sequence of inequalities also implies that $\lambda_D=k$. 

\end{proof}

Next, we show that if $B$ is a deficient-cut, then it is a min-rooted cut in some representative leaf. We need an intermediate claim. Let $B\in \cT$. 
We say that $B$ \emph{survives} in a capacitated digraph $D$ if $\delta_D^{in}(B)$ has no promoted arcs. 
\begin{claim}\label{claim:survival}
    Let $B\in \cT$. 
    Suppose $B$ survives in an internal-node $D$. Then, $B$ survives in at least one of the children $D_I$ of $D$ and moreover, $D_I$ is not discarded. 
\end{claim}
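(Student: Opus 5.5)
We follow the argument used for Claim~\ref{claim:arborescence}, now applied with $B$ fixed to be the deficient-cut, and then choose $I$ to contain the indices of the arborescences through which $B$'s safe arcs enter.

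Let $T_1,\ldots,T_{\lambda_D}$ be the packing computed at $D$. Since $B$ survives in $D$, every arc entering $B$ in $D$ has capacity one. By Proposition~\ref{prop:easy-obs}(1) and the fact that no promoted arcs enter $B$, we get $d^{in}_D(B)=d^{in}_{D_0}(B)=|\delta_E(B)|=k$. We also have $\lambda_D\le k$ because $D$ is an internal-node (Proposition~\ref{prop:easy-obs}(3)). Each $T_i$ enters $B$ at least once, and the sets $\delta^{in}_{T_i}(B)$ are pairwise disjoint because the entering arcs have unit capacity. Hence the arborescences use at least $\lambda_D$ distinct arcs entering $B$.

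Next we count safe arcs. The safe arcs entering $B$ in $D$ are exactly the orientations pointing into $B$ of the safe edges crossing $B$. Their number is $|\delta_S(B)|\le p-1$ because $B\in\cT$. Each such arc lies in at most one $T_i$. So the set $J:=\{i\in[\lambda_D]: \delta^{in}_{T_i}(B)\text{ contains a safe arc}\}$ has $|J|\le p-1$. Since $\lambda_D\ge p> p-1$ by Proposition~\ref{prop:easy-obs}(2), we can extend $J$ to a set $I\supseteq J$ with $|I|=p-1$ and $I\subseteq[\lambda_D]$. For every $i\notin I$ the arborescence $T_i$ has no safe arc entering $B$. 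Promoting the safe arcs of $\bigcup_{i\notin I}T_i$ therefore promotes no arc of $\delta^{in}(B)$, and $B$ survives in $D_I$.

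Finally, $D_I$ is not discarded. Since $B$ survives in $D_I$, we have $\lambda_{D_I}\le d^{in}_{D_I}(B)=d^{in}_{D_0}(B)=k$. Thus Step 3(v) does not apply, and $D_I$ is either added as a representative-leaf or recursed on.

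The main subtlety is the counting step: $J$ must be defined through the arcs of the packing rather than through edges. This relies on the disjointness that unit capacities on $\delta^{in}_D(B)$ guarantee, so that distinct arborescences entering via safe arcs consume distinct safe arcs.
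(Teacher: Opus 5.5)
Your proof is correct and follows essentially the same route as the paper: you note that at most $p-1$ unit-capacity safe arcs enter $B$, choose $I$ to contain the indices of the arborescences that use them, and bound $\lambda_{D_I}\le d^{in}_{D_I}(B)=k$. You are slightly more explicit than the paper on two points, namely why each such safe arc lies in at most one $T_i$ and why $\lambda_D\ge p$ lets you pad $J$ to size $p-1$; the computation of $d^{in}_D(B)=k$ and $\lambda_D\le k$ in your second paragraph is harmless but not needed.
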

\begin{proof}
    Since there are no promoted arcs entering $B$ in $D$, the number of safe arcs entering $B$ in $D$ is equal to the number of safe arcs entering $B$ in $D_0$. The number of safe arcs  entering $B$ in $D_0$ is exactly $|\delta_{S}(B)|\le p-1$ since $B\in \cT$. Hence, the number of safe arcs entering $B$ in $D$ is at most $p-1$. 

    Since $D$ is an internal node, the children of $D$ are obtained by 
    considering a packing $T_1, T_2, \ldots, T_{\lambda_{D}}$ of arborescences in $D$; for each subset $I\subseteq[\lambda_{D}]$ with $|I|=p-1$, we obtain a child $D_I$ from $D$ by promoting the safe arcs in $\cup_{i\in [\lambda_D]\setminus I}T_i$. Since the number of safe arcs of $D$ entering $B$ is at most $p-1$, it follows that there exists an $I\subseteq[\lambda_D]$ with $|I|=p-1$ such that all safe arcs entering $B$ in $D$ are not promoted in $D_I$. Thus, $B$ survives in $D_I$. 

    Next, we show that $D_I$ is not discarded. 
    Since $B\in \cT$, we have that $k=|\delta_E(B)|=d^{in}_{D_0}(B)$. Moreover, $d^{in}_{D_0}(B)=d^{in}_{D_I}(B)$ since $B$ survives in $D_I$ and hence, has no promoted arcs entering it in $D_I$. Thus, $\lambda_{D_I}\le d^{in}_{D_I}(B)=k$. Therefore, $D_I$ is not discarded. 
\end{proof}

We now use Claim \ref{claim:survival} and Lemma \ref{lem:survivors-are-deficient} to show that every deficient-cut is a min-rooted cut in some representative-leaf. 

\begin{lemma}\label{lem:deficient-cuts-survive}
    If $B\in \cT$, then $B$ is a min rooted-cut of $D$ for some $D\in \cC_0$. 
\end{lemma}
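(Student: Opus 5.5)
We argue by walking down the recursion tree, keeping $B$ alive at every step with Claim~\ref{claim:survival}. Fix $B\in\cT$.

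\emph{Base step.} $B$ survives in the root $D_0$, since no arc of $D_0$ is promoted. We may assume $D_0$ is an internal-node. Indeed, $B\in\cT$ gives $\lambda_{D_0}\le d^{in}_{D_0}(B)=|\delta_E(B)|=k$, and the algorithm is called on $D_0$ and branches on it.

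\emph{Descent.} Suppose $B$ survives in an internal-node $D$. Claim~\ref{claim:survival} gives a child $D_I$ in which $B$ survives and which is not discarded. So $D_I$ is either an internal-node or a representative-leaf, and we move to $D_I$. Lemma~\ref{lem:termination} shows every root-to-leaf path is finite, so this descent must end. It ends at a leaf $D$ in which $B$ survives and which is not discarded, i.e.\ a representative-leaf, so $D\in\cC_0$.

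\emph{$B$ is a min rooted-cut of $D$.} Since $B$ survives in $D$, no promoted arc enters $B$. Every arc of $D_0$ has capacity one, so
\[
d^{in}_D(B)=d^{in}_{D_0}(B)=|\delta_E(B)|=k.
\]
By Lemma~\ref{lem:survivors-are-deficient}, every $D\in\cC_0$ has $\lambda_D=k$. Hence $d^{in}_D(B)=\lambda_D$, and $B$ is a min rooted-cut of $D$.

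\emph{Main obstacle.} The only delicate point is that the descent cannot stop at a discarded leaf. Claim~\ref{claim:survival} handles exactly this, because $\lambda_{D_I}\le d^{in}_{D_I}(B)=k$. The other ingredient is the equality $\lambda_D=k$ at representative-leaves, which Lemma~\ref{lem:survivors-are-deficient} supplies. If one prefers not to rely on that lemma, the same conclusion follows directly. Proposition~\ref{prop:easy-obs}(2) and the $(p,q-1)$-flex-connectivity of $H$ give $d^{in}_D(B')\ge |\delta_E(B')|\ge k$ for any $B'$ with at most $p-1$ safe crossing edges. For any other $B'$, one would instead need to examine the promotions along the path, which is why citing the lemma is cleaner.
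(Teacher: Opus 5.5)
Your proposal is correct and follows the paper's proof essentially step for step: you descend the recursion tree using Claim~\ref{claim:survival} to a representative-leaf $D$ in which $B$ survives, then combine $d^{in}_D(B)=d^{in}_{D_0}(B)=|\delta_E(B)|=k$ with $\lambda_D=k$ from Lemma~\ref{lem:survivors-are-deficient}. One small quibble: $D_0$ is an internal-node simply because it has $\binom{\lambda_{D_0}}{p-1}\ge 1$ children (as $\lambda_{D_0}\ge p$), so the bound $\lambda_{D_0}\le k$ you cite is not what is needed there, and your closing ``alternative'' remark is incomplete but harmless since you do not rely on it.
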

\begin{proof}
    Initially, $B$ survives in $D_0$. By applying Claim \ref{claim:survival} inductively, $B$ survives in some representative-leaf $D$. We recall that every representative-leaf is in $\cC_0$. Thus, $B$ survives in a representative leaf $D\in \cC_0$. It suffices to show that $B$ is a min rooted-cut of $D$. We have that 
    $d^{in}_D(B)=d^{in}_{D_0}(B)$ since $B$ survives in $D$ and hence, has no promoted arcs entering it in $D$. Moreover, $d^{in}_{D_0}(B)=|\delta_E(B)|=k$ since $B\in \cT$. 
    We know that $\lambda_D=k$ by Lemma \ref{lem:survivors-are-deficient}. Thus, 
    \[
    k=\lambda_D \le d^{in}_D(B) = d^{in}_{D_0}(B)=|\delta_E(B)|=k
    \]
    implies that $\lambda_D=d^{in}_D(B)$ and hence, $B$ is a min rooted-cut of $D$. 
\end{proof}

\section{Proof of Theorem \ref{thm:pqFGCAug-with-runtime}}\label{sec:covering-reps}

\begin{proof}[Proof of Theorem \ref{thm:pqFGCAug-with-runtime}]
We use the algorithm below. 

\medskip
\noindent\fbox{
\begin{minipage}{0.95\linewidth}
\textbf{Augment$(H=(V, E=S\uplus U), p, q, N, c: N\rightarrow \R_{\ge 0})$}
\begin{enumerate}\setlength{\itemsep}{3pt}
 \item Construct the capacitated digraph $(V, A')$ by bidirecting every edge of $N$. Set the capacity of each $a\in A'$ to be unit and define the cost function $c': A'\rightarrow \R_{\ge 0}$ by setting $c'(a)=c(e)$ for both orientations $a$ of $e$ for all $e\in N$. 
 \item Apply Theorem \ref{thm:representatives} to input $(H, p, q)$ to obtain the collection $\cC$. 
 \item For each $D\in \cC$: 
 \renewcommand{\labelenumii}{(\roman{enumii})}
\begin{enumerate}\setlength{\itemsep}{3pt}
  \item Compute $F_D'$ as an optimal solution to the \rootedCAug instance $(D, A', c')$ using Theorem \ref{thm:rootedCAug-is-poly-time}. 
  \item Set $F_D$ to be the set of edges $e\in N$ such that at least one orientation of $e$ is in $F_D'$.
\end{enumerate}
 \item Return $F=\cup_{D\in \cC}F_D$.
\end{enumerate}
\end{minipage}
}
\medskip

We first bound the run-time of the algorithm. The algorithm of Theorem \ref{thm:representatives} constructs a collection $\cC$ of at most $L(p,q)$ capacitated digraphs in time $qL(p, q)$ times a polynomial in the input size. For each digraph $D\in \cC$, Step 3(i) constructs a \rootedCAug instance whose size is polynomial in the input size. The algorithm of Theorem \ref{thm:rootedCAug-is-poly-time} runs in polynomial time. Step 3(ii) is also polynomial time. Hence, the total run-time is $qL(p, q)$ times polynomial in the input size. 

Next, we show that the returned solution $F$ is feasible. Let $B\in \cT$ be a deficient cut. We recall that the input instance of \pqFGCAug is feasible. Hence, $|\delta_N(B)|\ge 1$. By Theorem \ref{thm:representatives}, there exists $D\in \cC$ such that $B$ is a min rooted-cut in $D$. Consider the solution $F_D'$ to the \rootedCAug instance given by $(D, A', c')$. Since $\lambda_{D+F_D'}\ge \lambda_D+1$ and $B$ is a min rooted-cut in $D$, we have that $|\delta_{F_D'}^{in}(B)|\ge 1$. Consequently, $|\delta_{F_D}(B)|\ge 1$. 

Next, we bound the approximation factor. We note that $c(F_D)\le c'(F_D')$ for each $D\in \cC$. 
Let $F_{opt}$ be an optimum solution for the input instance of \pqFGCAug. We will show that $c'(F_D')\le 2 c(F_{opt})$ for each $D\in \cC$. Let $F_{opt}'$ be obtained by bidirecting every edge in $F_{opt}$. We observe that $c'(F_{opt}')=2c(F_{opt})$. 

We claim that for each $D\in \cC$, $F_{opt}'$ is a feasible solution to the \rootedCAug instance given by $(D, A', c')$. Let $D\in \cC$. Consider a subset $B\subseteq V\setminus \{r\}$. We need to show that $d^{in}_{D+F_{opt}'}(B)\ge \lambda_D + 1$. For the sake of contradiction, suppose $d^{in}_{D+F_{opt}'}(B)\le \lambda_D$. We know that $\lambda_D\le d^{in}_{D}(B)\le d^{in}_{D+F_{opt}'}(B)\le \lambda_D$. Hence, $B$ is a min rooted-cut of $D$. By Theorem \ref{thm:representatives}, we have that $B\in \cT$. Hence, $|\delta_{F_{opt}}(B)|\ge 1$. Therefore, $|\delta^{in}_{F_{opt}'}(B)|=|\delta_{F_{opt}}(B)|\ge 1$. Consequently, $d^{in}_{D+F_{opt}'}(B)=d^{in}_D(B)+ |\delta^{in}_{F_{opt}'}(B)|\ge \lambda_D+1$, a contradiction. 

We have shown that $F_{opt}'$ is a feasible solution to the \rootedCAug instance given by $(D, A', c')$. We recall that $F_D'$ is an optimum to the \rootedCAug instance given by $(D, A', c')$. 
Hence, $c(F_D)\le c'(F_D')\le c'(F_{opt}') = 2c(F_{opt})=2OPT_{aug}$. Finally, we have that $c(F)\le \sum_{D\in \cC}c(F_D)\le 2|\cC|OPT_{aug}\le 2L(p, q)OPT_{aug}$, where the last inequality is because $|\cC|\le L(p, q)$. 
    
\end{proof}
\section{Proof of Theorem \ref{thm:pqFGC}}\label{sec:fgc-thm-proof}
\begin{proof}[Proof of Theorem \ref{thm:pqFGC}]

If $\alpha(p, q)>\log{n}$, then we use Ibrahimpur-V\'{e}gh's $O(\log{n})$-approximation. Henceforth, we assume that the input instance has $\alpha(p, q)\le \log{n}$. 

Let $(G=(V, E=S\uplus U), p, q, c: E\rightarrow \R_{\ge 0})$ be the input instance. In the $0$'th phase, we obtain a $2$-approximate minimum-cost $p$-edge-connected spanning subgraph of $G$ by ignoring the safe/unsafe classification, i.e., we obtain a two-approximation for the following problem:
    \begin{align}
    \min\{c(F): F\subseteq E, d_F(B)\ge p\ \forall\ \emptyset\neq B\subsetneq V\}. \label{eq:phase-0}
    \end{align}
This problem admits a polynomial-time $2$-approximation via classic results \cite{Jain2001}. Let $F_0$ be the solution obtained. 

Next, we perform repeated augmentations. If $q=0$, then we return $F$. Otherwise, we execute augmentation phases incrementally for $i=1, 2, \ldots, q$: In the $i$'th augmentation phase, we start with a subset $F_{i-1}$ of edges such that the graph $H_{i-1}:=(V, F_{i-1})$ is \pqminusoneFlexC. We apply the algorithm from Theorem \ref{thm:pqFGCAug} independently $t=\Theta(\log{q})$ times to the $(p, i)$-FGC-Aug instance $(H_{i-1}, p, i, E\setminus F_{i-1}, c|_{E\setminus F_{i-1}})$, where $c|_{E\setminus F_{i-1}})$ is the cost function on $E$ restricted to the edges in $E\setminus F_{i-1}$. Among all $t$ executions that return a feasible augmentation, we pick the cheapest and set it to be $F_i'$; if none of the executions return a feasible augmentation, then we set $F_i'=E\setminus F_{i-1}$. Next we set $F_i:=F_{i-1}\cup F_i'$. Finally, we return $F_q$. 

For each $i=0, 1, \ldots, q$, the graph $(V, F_i)$ is $(p,i)$-Flex-Connected. Thus, the input to phase $i+1$ is a valid instance of \pqFGCAug. Consequently, $F_q$ is \pqFlexC. 

Next, we bound the approximation factor. Let $F^*$ be an optimum solution for the input instance. For phase $0$, we observe that $F^*$ is feasible for \eqref{eq:phase-0} and hence, $c(F)\le 2c(F^*)=2OPT$. 
Consider augmentation phase $i$ for $i\in [q]$. The set $F^*\setminus F_{i-1}$ is a feasible set for the augmentation problem in phase $i$. Hence, the optimum augmentation cost for phase $i$ is at most $OPT$. Each of the $t$ executions has a constant probability of returning an augmentation that is both feasible and has cost at most $O(iL(p, i))OPT$. Hence, the probability that $c(F_i')=\Omega(iL(p, i))OPT$ is $2^{-O(t)}\le 1/100q$ for $t=\Theta(\log{q})$. 
Thus, a union bound over all $q$ phases implies that $c(F)=O(\sum_{i=0}^q i L(p, i))OPT$ with probability at least $1-1/100$. 

For the runtime, we recall that $\alpha(p, q)\le \log{n}$. This in particular, means that $q\le \log{n}$. Therefore, the run-time of all phases $i=1, 2, \ldots, q$ involves $qt=O(q\log{q})=O(\log^2{n})$ calls to the algorithm of Theorem \ref{thm:pqFGCAug} in addition to $qt=O(q\log{q})=O(\log^2{n})$ feasibility testings which is also polynomial-time. Thus, the overall run-time is polynomial in the input size.

\end{proof}

\section*{AI Disclosure}
ChatGPT Astra was used in developing the ideas underlying this work. The authors assume full responsibility for all content. 

\bibliographystyle{abbrv}
\bibliography{references}

\appendix

\end{document}